\newtheorem{lemma}{Lemma}
\begin{document}

\title{Joint Network Function Placement and Routing Optimization in Dynamic Software-defined Satellite-Terrestrial Integrated Networks}

\author{{Shuo~Yuan,~\IEEEmembership{Member,~IEEE},
			Yaohua~Sun,
			and~Mugen~Peng,~\IEEEmembership{Fellow,~IEEE}}
	\thanks{This work was supported in part by the Beijing Municipal Science and Technology Project under Grant Z211100004421017, in part by the National Natural Science Foundation of China under Grants 62371071 and 62001053, and in part by the Young Elite Scientists Sponsorship Program by the China Association for Science and Technology under Grant 2021QNRC001.
	}
	\thanks{Shuo~Yuan (yuanshuo@bupt.edu.cn), Yaohua~Sun (sunyaohua@bupt.edu.cn), and Mugen~Peng (pmg@bupt.edu.cn) are with the State Key Laboratory of Networking and Switching Technology, Beijing University of Posts and Telecommunications, Beijing 100876, China. 
	(\emph{Corresponding author: Yaohua Sun})
	}
}


\maketitle

\begin{abstract}
	Software-defined satellite-terrestrial integrated networks (SDSTNs) are seen as a promising paradigm for achieving high resource flexibility and global communication coverage. However, low latency service provisioning is still challenging due to the fast variation of network topology and limited onboard resource at low earth orbit satellites. To address this issue, we study service provisioning in SDSTNs via joint optimization of virtual network function (VNF) placement and routing planning with network dynamics characterized by a time-evolving graph. Aiming at minimizing average service latency, the corresponding problem is formulated as an integer nonlinear programming under resource, VNF deployment, and time-slotted flow constraints. Since exhaustive search is intractable, we transform the primary problem into an integer linear programming by involving auxiliary variables and then propose a Benders decomposition based branch-and-cut (BDBC) algorithm. Towards practical use, a time expansion-based decoupled greedy (TEDG) algorithm is further designed with rigorous complexity analysis. Extensive experiments demonstrate the optimality of BDBC algorithm and the low complexity of TEDG algorithm. Meanwhile, it is indicated that they can improve the number of completed services within a configuration period by up to 58\% and reduce the average service latency by up to 17\% compared to baseline schemes.
\end{abstract}

\begin{IEEEkeywords}
	Satellite-terrestrial integrated networks, virtual network function placement, routing planning, end-to-end latency
\end{IEEEkeywords}

\IEEEpeerreviewmaketitle

\section{Introduction}
\IEEEPARstart{S}{atellite} networks are able to provide communication services for remote areas, e.g. deserts and oceans, that lack terrestrial communication infrastructure.
Owing to this merit, the paradigm of satellite-terrestrial integrated networks (STNs) is considered as the key to achieving global coverage and ubiquitous connectivity in the sixth generation (6G) era \cite{chen2020visionrequirements}.
Compared with geostationary earth orbit satellites and medium earth orbit satellites, low earth orbit (LEO) satellites have a lower altitude with the range of 300-2000 $km$ above the earth surface, which means reduced propagation latency, better signal quality, and lower launch cost \cite{kodheli2021satellitecommunications}.
Representative commercial LEO satellite programs include Starlink and OneWeb \cite{delportillo2019technicalcomparison}.

In conventional practice, LEO satellite payloads have generally been designed for dedicated services, such as remote sensing or navigation \cite{braun2022satellitecommunications}.
Consequently, the resource of the payload, such as communication and computation resource, cannot be shared by other services, even when the payload is idle.
This limitation leads to low resource utilization.
To address this issue, some studies have introduced network function virtualization (NFV) technology into satellite payload design \cite{qin2023serviceawareresource, yang2023spaceinformation,shi2019crossdomainsdn}, which substitutes hardware middleboxes with software modules known as virtual network functions (VNFs).
The deployment of VNFs offers more flexibility compared to the manual installation of hardware middleboxes, as it involves initiating a VNF instance using virtualized resources through a prebuilt file \cite{ma2022mobilityawaredelaysensitive}.
Furthermore, by incorporating software-defined network (SDN) to separate the control plane and the data plane of STNs \cite{yuan2022softwaredefined}, software-defined STNs (SDSTNs) are formed, featuring flexible network management and system performance control.

\subsection{Related Works}

Different from regular resource allocation strategies in STNs, which typically focus on aspects such as beam scheduling, power allocation, and subchannel assignment to meet user data rate demands \cite{wang2022jointoptimization} or optimize system throughput \cite{di2019ultradenseleoa}, SDSTNs offer more flexible resource management, such as deploying VNFs and planning service routing paths on-demand, to improve service quality or network profit \cite{yuan2022softwaredefined}.
These regular resource allocation strategies cannot be directly applied to the joint VNF placement and routing planning in STNs due to the following challenges.
First, the relative motion among satellites leads to dynamic network topologies and variable resource availability, making it difficult to characterize heterogeneous resources in dynamic STNs.
The existing works often model the network topology in a static manner, which limits their applicability in dynamic scenarios where topology changes are frequent \cite{di2019ultradenseleoa, tang2021computationoffloading, cheng2019spaceaerialassistedcomputing}.
Second, VNF placement in STNs is closely coupled with service routing path planning.
Although there are some studies on computing offloading that consider the selection of offloading paths, these optional offloading paths are often deterministic and limited in number \cite{tang2021computationoffloading,cheng2019spaceaerialassistedcomputing,qiu2019deepqlearning}.
However, in the context of VNF placement, the service path must traverse all relevant VNFs, which may be placed on different network nodes.
Consequently, it becomes critically important to accurately model flow conservation constraints in STNs.
Moreover, the limited satellite payloads impose constraints on onboard resource, including computation units and bandwidth, which are scarce and must be carefully allocated and scheduled to ensure high resource utilization.

For routing planning in STNs, an energy-efficient satellite routing algorithm is designed in \cite{yang2016energyefficientrouting} where the number of hops and the load of links are considered.
To avoid satellite congestion and reduce queuing latency, literature \cite{zhu2017softwaredefined} proposes a software-defined routing algorithm based on a distributed master-slave SDN controller architecture.
In \cite{tang2020dynamicallyadaptive}, a cross-domain adaptive cooperation routing scheme for SDN-based layered STNs is presented to balance the load among nodes and maximize the network throughput.
Moreover, some other works have investigated the joint optimization of VNF placement and routing planning.
In \cite{wang2020sfcbasedservice}, a sharing-based VNF placement and service data routing problem is studied to minimize resource cost, and a greedy algorithm is proposed by considering the differential features of satellite and ground nodes.
To lower cross-domain service latency, a multi-domain VNF mapping orchestration framework with distributed controllers is proposed in \cite{li2018horizontalbasedorchestration}.
In \cite{gao2022dynamicresource}, the VNF placement problem in satellite edge clouds is formulated as an integer non-linear programming problem to jointly minimize the network bandwidth cost and service end-to-end latency and a heuristic distributed VNF placement algorithm is developed.

However, the above works assume that the topologies of SDSTNs are static, which are inconsistent with practical situations.
Although some works \cite{ma2022mobilityawaredelaysensitive,chen2019mobilityawareservice} incorporate user mobility in VNF placement optimization and introduce user dynamics into network modeling, the target scenarios of these works are primarily focused on terrestrial networks.
In such contexts, the paths between network nodes are generally considered to be predetermined and stable, which reduces the necessity for further consideration of routing planning.
However, in SDSTNs, the availability of inter-satellite links (ISLs), satellite-terrestrial links (STLs), and computation resource vary over time, leading to a highly dynamic network structure. These dynamic natures of SDSTNs necessitate being captured precisely for the exploration of joint VNF placement and routing planning.

To characterize the dynamic network topology and availability of heterogeneous resource, some researchers have adopted a time-evolving graph (TEG) when modeling concerned systems, such as \cite{zhou2017missionaware,zhou2019distributionallyrobust,jia2021vnfbasedservice,gao2022virtualnetwork}.
In \cite{jia2021vnfbasedservice}, the authors formulate the joint VNF placement and routing planning in dynamic SDSTNs as an integer linear programming problem based on TEG with the goal of minimizing communication resource consumption, and a branch-and-price algorithm is proposed.
In \cite{gao2022virtualnetwork}, the authors use a TEG to model the VNF deployment problem as a potential game and then propose a decentralized algorithm to minimize the network payoff, which includes bandwidth and energy consumption costs.
In addition, the authors in \cite{li2022costawaredynamic} propose a Tabu search-based VNF placement algorithm for randomly arrived services in STNs.
This algorithm dynamically adjusts the VNF placement strategy to maximize the service provider's profit while addressing the end-to-end latency requirement of newly arrived service requests.
To minimize the transmission cost of computation tasks in time-varying STNs, the authors in \cite{li2022processingwhiletransmittingcostminimized} formulate a joint service routing and VNF placement problem and then propose a classic shortest path-based solution.
However, there have been few efforts toward reducing service end-to-end latency in the context of VNF placement and routing planning under multi-time slot dynamic STN topologies, and there is a lack of research on modeling time-slotted service flow conservation constraints.

\subsection{Motivations and Contributions}

Based on the above literature review and observations, the following limitations can be identified for VNF placement and routing planning in existing works concerning dynamic STNs, i.e.,
\begin{enumerate}
	\item
	      Although some existing works, such as \cite{li2022processingwhiletransmittingcostminimized, jia2021vnfbasedservice}, have explored VNF placement in dynamic STN topologies, they focus on optimizing resource efficiency and network profit.
	      However, there remains a lack of research addressing the reduction of end-to-end service latency under these dynamic topologies.
	\item
	      Concerning routing planning coupled with VNF placement, existing works, such as \cite{li2022costawaredynamic,wang2020sfcbasedservice, gao2022dynamicresource, li2018horizontalbasedorchestration, chen2019mobilityawareservice, gao2022virtualnetwork, li2022processingwhiletransmittingcostminimized}, do not consider routing across multiple time slots in STNs. Consequently, there is a scarcity of research examining service flow conservation under dynamic network topologies.
	      Furthermore, the undetermined service completion time poses challenges for modeling time-slotted service flow conservation.
	\item
	      Exclusive allocation strategies over large periods are commonly employed in existing works, which often limit the reallocation of idle resource after completing a service and resulting in low resource utilization efficiency.
	\item
	      The joint VNF placement and routing planning problems are generally NP-hard \cite{jia2021vnfbasedservice,gao2022virtualnetwork,li2022costawaredynamic}, making it challenging to obtain an optimal solution with polynomial complexity.
	      Due to the dynamic nature of STN topologies, additional challenges arise, including larger variable and constraint sizes, as well as nonlinear flow conservation constraints, when devising effective solutions for such joint problems.
\end{enumerate}

Motivated by these observations, in this paper, the communication resource of each communication link and computation resource at each network node are allocated to each service on a per-time-slot basis.
In addition, the joint VNF placement and routing planning problem is formulated to minimize the average end-to-end service latency under dynamic flow constraints and resource constraints of LEO satellites and ground nodes.
Since the formulated problem is NP-hard, it is intractable to obtain optimal solutions through exhaustive search.
Therefore, we utilize Benders decomposition (BD) to decompose the primal problem into a master problem and a subproblem.
Benders cuts are generated by solving the subproblem, which helps tighten the solution space of the master problem iteratively.
In addition, to obtain a low complexity solution for practical use, a time expansion-based heuristic algorithm is further proposed.

\begin{table}[t]
	\caption{List of notations}
	\label{tab:notations}
	\centering
	\begin{tabular}{c|l}
		\toprule[1pt]
		\textbf{Notation}   & \multicolumn{1}{c}{\textbf{Description}}                    \\ \midrule[1pt]
		$\cal G$            & The time-evolving graph of SDSTNs                           \\
		${\cal N}, N$       & The set and number of network nodes                         \\
		${\cal N}_s, N_s$   & The set and number of satellites                            \\
		${\cal N}_g, N_g$   & The set and number of ground stations                       \\
		${\cal N}_u, N_u$   & The set and number of ground users                          \\
		${\cal L}, L$       & The set and number of links                                 \\
		${\cal L}_c, L_c$   & The set and number of communication links                   \\
		${\cal L}_s, L_s$   & The set and number of stay links                            \\
		${R_{nm}^t}$        & Available transmission rate of link $nm$ in time slot $t$   \\
		$\varphi _{nm}^t$   & Parameter indicates whether link $nm$ is active             \\
		${d_{nm}^t}$        & Distance of link $nm$ in time slot $t$                      \\
		${\cal T}$, $T$     & Time horizon and total number of time slots                 \\
		$t$, $\tau$         & The $t$-th time slot in ${\cal T}$ and time slot length     \\
		${\cal Q}, Q$       & The set and number of services                              \\
		${\cal I}_q, I_q$   & The set and number of VNFs of service $q$                   \\
		${{\cal L}}_q, L_q$ & The set and number of virtual links of service $q$          \\
		$s_q$               & The source node of service $q$                              \\
		$e_q$               & The destination node of service $q$                         \\
		$c_q$               & The computation resource requirement of service $q$         \\
		$\delta_q$          & The initial data size of service $q$                        \\
		${\beta _i^q}$      & \makecell[l]{Computation resource consumption of $i$-th VNF \\of service $q$ for deployment}                     \\
		${\varepsilon}$     & The computation resource for 1-bit data processing          \\
		${x_n^{q,i}}$       & \makecell[l]{A binary variable that indicates whether       \\$i$-th VNF of service $q$ is deployed on node $n$}            \\
		${y_{nm}^{q,i,t}}$  & \makecell[l]{A binary variable that indicates whether       \\$i$-th virtual link of service $q$ passes through link $nm$} \\
		\bottomrule[1pt]
	\end{tabular}
\end{table}

The contributions of this paper are summarized as follows.

\begin{itemize}
	\item
	      The joint VNF placement and routing planning problem is considered with the objective of minimizing average end-to-end service latency in SDSTNs, where TEG is leveraged to model the dynamics of network topology and availability of communication and computation resource. Moreover, under the assumption that network resource can be allocated among services on a per-time-slot basis, enhanced flow conservation constraints are derived.
	\item
	      Since the primal problem is an integer non-linear programming (INLP), which is intractable to obtain optimal solutions by exhaustive search, the primal problem is first converted into an integer linear programming (ILP) by involving multiple auxiliary variables, and then an iterative Benders decomposition based branch-and-cut (BDBC) algorithm is proposed to accelerate the acquisition of optimal solutions. To further lower the computational complexity for practical use, a time expansion-based decoupled greedy (TEDG) algorithm is designed.
	\item
	      We rigorously analyze the computational complexity of the TEDG algorithm with regard to the number of services, operation time slots, and the number of network nodes.
	      Extensive simulations are conducted to evaluate the performance of both proposals, in terms of optimality and execution time. Moreover, it is shown that they can improve the performance of average service latency and meanwhile contribute to much more completed services within a configuration period compared to baseline schemes.
\end{itemize}

The remainder of this paper is organized as follows.
System model and problem formulation are described in Section \ref{sec:systemModel}.
Problem decomposition and optimal algorithm design are provided in Section \ref{sec:bcbd4exactSolution}, and the TEDG algorithm is proposed in Section \ref{sec:greedySolution}.
The performance evaluation and corresponding analyses are presented in Section \ref{sec:performanceEvaluations}.
Finally, Section \ref{sec:conclusion} concludes this paper.
For convenience, some important notations are listed in Table \ref{tab:notations}.

\begin{figure}[t]
	\centering
	\begin{minipage}[t]{0.9\linewidth}
		\centering
		\includegraphics[width=\textwidth]{./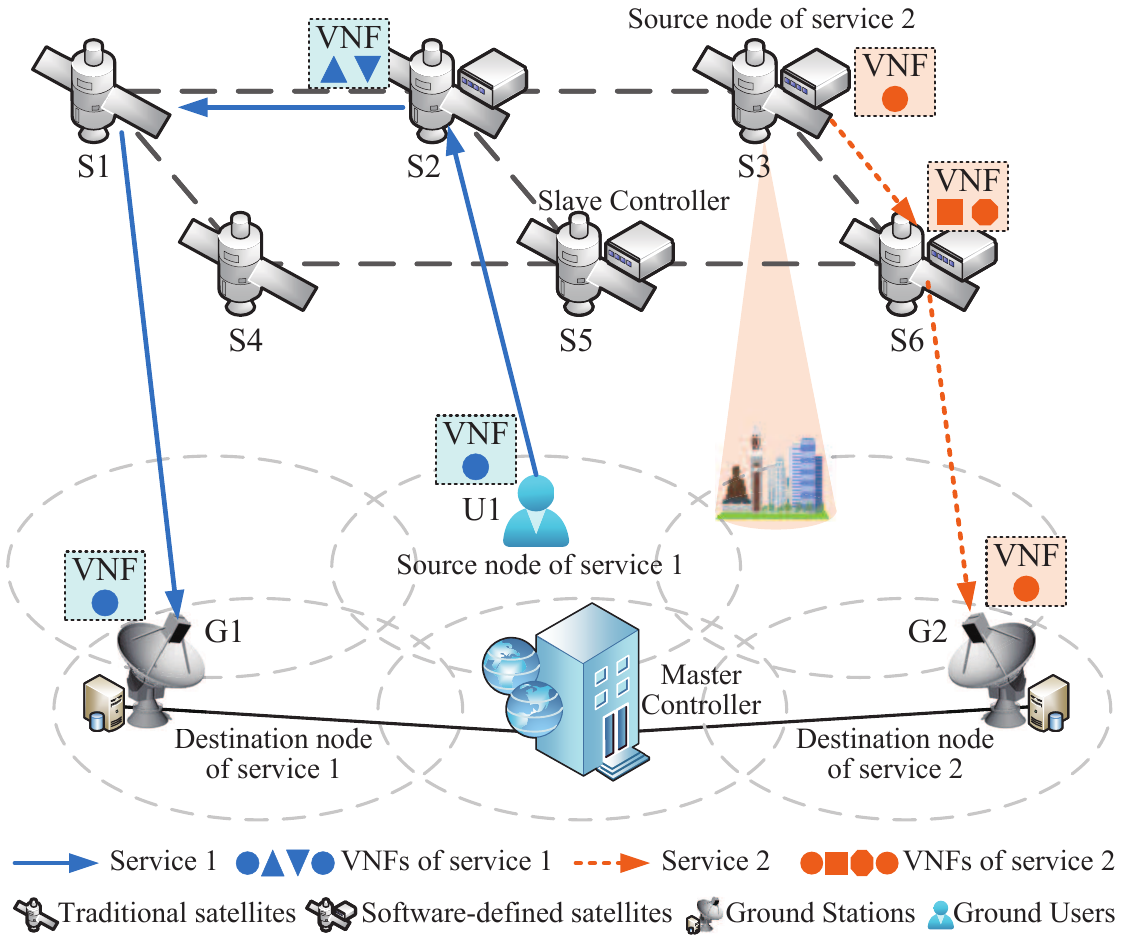}
		\vspace{-0.2in}
		\caption{An example of SDSTN topology.}
		\label{fig:des_topology}
	\end{minipage}%
	\vspace{0.15in}
	\hfill%
	\begin{minipage}[t]{0.9\linewidth}
		\centering
		\includegraphics[width=\textwidth]{./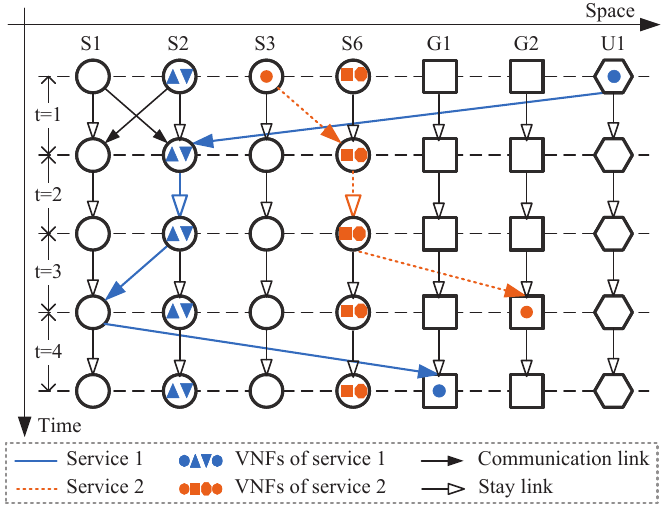}
		\vspace{-0.2in}
		\caption{An example of the TEG model for SDSTNs.}
		\label{fig:des_TEGmodel}
	\end{minipage}
\end{figure}

\section{System Model and Problem Formulation}
\label{sec:systemModel}
Fig. \ref{fig:des_topology} illustrates an example topology of SDSTNs, which consists of a set of LEO satellites, ground stations, and ground users.
Ground stations and ground users are able to establish connections with satellites within line of sight using STLs.
Meanwhile, ground stations with computation resource are interconnected via optical fiber.
LEO satellites can be categorized as traditional satellites and software-defined satellites equipped with computation resource capable of hosting VNFs, and they communicate with each other via ISLs.
A distributed master-slave controller architecture is adopted in this paper to realize timely network status awareness, based on which VNF placement and routing planning are optimized.
The configuration details of the controllers are omitted and can be found in our previous work \cite{yuan2022softwaredefined}.
In the system model, various types of end-to-end services exist, including services that originate from ground users and are transmitted through satellites to ground stations, such as Internet of Things data collection services in remote areas, and services that originate from satellites and are delivered to ground stations or ground users, such as remote sensing services.

\subsection{Network Modeling with TEG}
The network configuration period for VNF placement and routing planning is divided into $T$ time slots, whose set is given by $\cal T$, and the length of each time slot $t \in {\cal T}$ is $\tau$.
Similar to \cite{jia2021vnfbasedservice,gao2022dynamicresource}, the network topology is assumed to be quasi-static within each time slot but varies across different time slots, due to the movement of LEO satellites.
Then, as shown in Fig. \ref{fig:des_TEGmodel}, the network is further characterized by a TEG ${\cal G} = \left( {{\cal N},{\cal L}} \right)$, where ${\cal N} = {\cal N}_s \cup {\cal N}_g \cup {\cal N}_u$ and ${\cal L} = {\cal L}_c \cup {\cal L}_s$ represent the set of network nodes and the set of links, respectively.
Specifically, ${\cal N}_s$, ${\cal N}_g$, and ${\cal N}_u$ denote the set of satellites, ground stations, and ground users, respectively, and the available computation resource at network node $n$ is represented as $C_n$.
${\cal L}_c$ and ${\cal L}_s$ are the set of communication links and stay links, respectively.
Specifically, the connection from node $n$ to node $m$ (represented as $nm$) is categorized as a direct communication link when $n \ne m$ and a stay link when $n = m$.
For cases where $n \ne m$, nodes $n$ and $m$ may be neighbors and capable of establishing an available direct communication link $nm$, or the significant distance between the two nodes may lead to direct communication link $nm$ being unavailable.
Stay links are used to characterize situations where services undergo processing or temporary storage for future forwarding at a network node within a single time slot.
Consequently, the parameters ${\varphi _{nm}^t}$, ${d _{nm}^t}$, and ${R_{nm}^t}$ denote the accessibility of link ${{nm}}$ during time slot $t$, the spatial distance between nodes $n$ and $m$, and the achievable transmission rate, respectively.
More precisely, ${\varphi _{nm}^t} = 1$ implies the availability of link ${{nm}}$ in time slot $t$, while a value of $0$ suggests inaccessibility.
Owing to the predictability of satellite movement trajectories, the value of ${\varphi _{nm}^t}$ for each communication link can be conveniently calculated in advance \cite{li2022processingwhiletransmittingcostminimized}.
In addition, it is evident that for each stay link $nn \in {\cal L}_s$, the values are established as ${\varphi _{nn}^t} = 1$, ${d _{nn}^t} = 0$, and ${R _{nn}^t} = \infty$.

\subsection{Service Model}
Assume that a set of end-to-end services occur simultaneously at the start of each configuration period, whose set is defined as ${\cal Q} = \{ q|q = 1,2,..., Q\}$.
Service $q \in {\cal Q}$ is described by a tuple $\{{s_q}, {e_q}, {\delta _q},{c_q},{{\cal I}_q}\}$, where ${{s_q}, {e_q}, \delta _q}$, and ${c_q}$ represent the source node, the destination node, the data size originated from the source node, and the computation resource requirement, respectively.
${{\cal I}_q}$ denotes a set of ordered VNFs which include two dummy VNFs and the set of VNFs required by service $q$.
These two dummy VNFs are introduced for the convenience of modeling. The first one is placed on the source node $s_q$, and the second one is placed on the destination node $e_q$.
$i \in {\cal I}_q$ indicates the $i$-th VNF of service $q$, and the number of VNFs in ${\cal I}_q$ is denoted by $I_q$.
To complete the service, the data flow of service $q$ needs to pass through all VNFs in ${\cal I}_q$ in order.
Moreover, we use ${{{\cal L}_q}}  = \{ l_{i} | i \in {{\cal I}_q} \backslash I_q \} $ to represent the set of virtual links experienced by service $q$.
Specifically, ${l_i} \in {{{\cal L}_q}} $ denotes the $i$-th virtual link of service $q$ which starts from $i$-th VNF to $(i+1)$-th VNF.

Define $x_n^{q,i}$ as a binary variable indicating whether the $i$-th VNF of service $q$ is placed on network node $n$, and we have
\begin{equation}
	x_n^{q,i}=
	\begin{cases}
		1, & \text{if } i \text{-th VNF of service } {q} \\
		   & \text{is deployed on node } n, \\
		0, & \text { otherwise. }
	\end{cases}
\end{equation}
Further, we use binary variable $y_{{nm}}^{q,i,t}$ to indicate whether the $i$-th virtual link $l_i \in {\cal L}_q$ passes through link ${nm} \in \cal L$ in time slot $t$, and we have
\begin{equation}
	y_{nm}^{q,i,t} =
	\begin{cases}
		1, & \text{if } i \text{-th virtual link} \text { passes through } {nm}, \\
		0, & \text {otherwise. }
	\end{cases}
\end{equation}

\subsection{Communication Model}

\subsubsection{ISL Model}
Inter-satellite communication operates in a free-space environment, which is mainly affected by free-space path loss and thermal noise \cite{zhou2019distributionallyrobust, leyva-mayorga2021interplaneintersatellite}.
Hence, following \cite{leyva-mayorga2021interplaneintersatellite, zhou2019distributionallyrobust}, the channel gain of ISL $nm \in {\cal L}_c$ in time slot $t$ with $n$ and $m$ representing two LEO satellites is given as
\begin{equation}
	g_{nm}^{t} =  {\frac{{c\sqrt {{G_{nm}}} }}{{4\pi {d_{nm}^t}f_{nm}\sqrt {{k_{{B}}}\zeta B_{nm}} }}},
\end{equation}
where $c$ is light speed, ${k_B}$ is Boltzmann constant, and $\zeta $ is thermal noise in Kelvin. $f_{nm}$, ${G_{nm}}$, and $B_{nm}$ are the center frequency, the fixed antenna gain, and the channel bandwidth of link $nm$, respectively.
Assume that the transceiver antennas of both satellites involved in an ISL can be aligned in the direction of maximum radiation \cite{leyva-mayorga2021interplaneintersatellite}.
Then, the available data rate of ISL $nm$ in time slot $t$ is expressed as
\begin{equation}
	R_{nm}^{t} = {B_{nm}}{\log _2}(1 + {p_{n}}|g_{nm}^t{|^2}),
\end{equation}
where $p_{n}$ is the transmission power of satellite $n$.

\subsubsection{STL Model}
The STL is modeled using Rician channel \cite{di2019ultradenseleoa}, and both dominant line of sight (LoS) path and weak scatter components propagated via different non-LoS (NLoS) paths are considered \cite{ippolito2008satellitecommunications,chen2021energyconstrainedcomputation}.
Hence, the channel gain from terrestrial node $n$ to satellite $m$ in time slot $t$ is given by \cite{chen2021energyconstrainedcomputation}
\begin{equation}
	\begin{aligned}
		g_{nm}^{t} =  \sqrt {\frac{\eta }{{1 + \eta }}} g_{nm}^{L,t} + \sqrt {\frac{1}{{1 + \eta }}{{\left( {{d_{nm}^t}} \right)}^{ - \rho '}}} g_{nm}^{NL,t},
	\end{aligned}
\end{equation}
where $g_{nm}^{L,t} = \sqrt {{{\left( {{d_{nm}^t}} \right)}^{ - \rho }}} {e^{ - j\frac{{2\pi f_{nm} }}{c }{d_{nm}^t}}}$ denotes the channel gain in LoS case while $g_{nm}^{NL,t} \sim {\cal N}(0,1)$ denotes the small-scale channel gain.
In addition, $\eta $ is the Rician factor, and $\rho $ and $\rho' $ are path loss exponents in the LoS case and NLoS case, respectively.
The achievable data rate $R_{{nm}}^t$ from terrestrial node $n$ to satellite $m$ and the achievable data rate $R_{{mn}}^t$ from satellite $m$ to terrestrial node $n$ are written as
\begin{equation}
	R_{{nm}}^t=B_{nm}\log _{2}(1+\frac{p_{n} G_{nm} \left|g_{{nm}}^{t}\right|^{2}}{\sigma_{0}^{2}B_{nm}}),
\end{equation}
\begin{equation}
	R_{mn}^t = B_{mn}{\log _2}( {1 + \frac{{{p_{m}}G_{mn}{{\left| {g_{mn}^{t}} \right|}^2}}}{{\sigma _{0}^2{B_{mn}}}}}),
\end{equation}
where ${p_{n}}$ and ${p_{m}}$ are the transmission power of terrestrial node $n$ and satellite $m$, respectively, and $\sigma _{0}$ is the Gaussian noise power.

\subsection{Latency Model}
\subsubsection{Communication Latency}
For each service, its communication latency, which is composed of transmission latency and propagation latency, is led by the data flow transmission over communication links.
Note that there is no communication latency when the service flow passes through stay links.
For service $q \in {\cal Q}$, the communication latency when the virtual link $l_i \in {\cal L}_q$ passes through communication link $nm \in {\cal L}_c$ is formulated as
\begin{equation}
	\label{eq:communitaionLatency}
	T_{nm}^{q,{l_i},t} = {\delta _q}\sum\limits_{q \in {\cal Q}} {\sum\limits_{i \in {{\cal I}_q}\backslash {I_q}} {y_{nm}^{q,i,t}} } /R_{nm}^t + \frac{{d_{nm}^t}}{v}.
\end{equation}
The first and second terms on the right side represent the transmission latency and propagation latency, respectively.
As indicated in the above formula, the bandwidth resource is assumed to be equally allocated among service flows that pass through the link in time slot $t$.

\subsubsection{Computation Latency}
For VNF $i \in {\cal I}_q$ of service $q$ placed on node $n$, the computation latency is
\begin{equation}
	T_{n}^{q,i} = { {\delta _q}\varepsilon}/{{{c_q}}},
\end{equation}
where $\varepsilon$ denotes the required computation resource to process one-bit data.
Since multiple VNFs of service $q$ may be placed on node $n$, the computation latency of service $q$ on node $n$ is calculated as
\begin{equation}
	T_{n}^q = \sum\limits_{i \in {{\cal I}_q}} {T_{n}^{q,i}} x_n^{q,i}.
\end{equation}

\subsection{Problem Formulation}
\subsubsection{Latency Constraints}
Assume data transmission must be finished within a single time slot when virtual link $l_i$ of service $q$ passes through communication link $nm$, which leads to the following constraint
\begin{equation}
	\label{eq:comLatency}
	T_{nm}^{q, l_i,t}y_{nm}^{q,i,t} \le \tau, \forall i \in {\cal I}_q \backslash I_q, q \in {\cal Q}, nm \in {\cal L}_c, t \in {\cal T}.
\end{equation}
On the other hand, the cumulative stay time of service $q$ on node $n$ should be longer than the computation latency of service $q$ on node $n$, which results in
\begin{equation}
	\label{eq:CompLatencyCons}
	T_n^q \le \sum\limits_{i \in {{{\mathcal I}_q}\backslash I_q} } {\sum\limits_{t \in {\mathcal T}} {y_{nn}^{q,i,t}\tau } }, \forall q \in {\mathcal Q}, n \in {\mathcal N},
\end{equation}
where the right-hand term is the cumulative stay time of service $q$ on node $n$, which is decided by the number of time slots that the service flow stays on link $nn$.

\subsubsection{VNF Placement Constraints}
For service $q$, each required VNF is assumed to be deployed on only one network node, i.e.,
\begin{equation}
	\label{eq:vnfEmbedCons}
	\sum \limits_{n \in {\cal N}} x_n^{q,i} = 1, \forall i \in {{\cal I}_q},q \in {\cal Q}.
\end{equation}
Note that the first VNF and the last VNF of service $q$ are placed on source node $s_q$ and destination node $e_q$, respectively, and that is
\begin{equation}
	\label{eq:vnfEmbedCons2}
	x_{{s_q}}^{q, 1} = 1, x_{{e_q}}^{q, I_q} = 1, \forall q \in {\cal Q}.
\end{equation}

\subsubsection{Flow Constraints}
The data flow of service $q$ goes into the network at the first time slot of the configuration period from source node $s_q$, which indicates the following constraint
\begin{equation}
	\label{eq:ingressLinkCons}
	\sum\limits_{n \in {\cal N}} {y_{{s_q}n}^{q,1,1}}  = 1,\forall q \in {\cal Q}.
\end{equation}
In addition, the data flow of each service should traverse all the nodes holding the required VNFs, which means
\begin{equation}
	\label{eq:routingVNF}
	x_n^{q,i} \le {\sum\limits_{t \in {\mathcal{T}}} {y_{nn}^{q,i,t}} } ,\forall q \in {\cal Q}, n \in {\cal N},i \in {{\cal I}_q}\backslash \{1,I_q\}.
\end{equation}
Meanwhile, within each time slot, the data flow of each service can only traverse either a communication link or a stay link.
When the service path traverses a communication link, it results in $\sum_{nm \in \mathcal{L}_c}{\sum_{i\in \mathcal{I}_q\backslash I_q}{y_{nm}^{q,i,t}}} = 1$.
On the other hand, when the service path traverses a stay link, the potential arises for deploying multiple consecutive VNFs on a single node and accomplishing their processing within a single time slot, which means $\sum_{i\in \mathcal{I} q\backslash I_q}{y_{nn}^{q,i,t}}\ge 1$.
By combining these two cases, the following constraint can be derived as
\begin{equation}
	\label{eq:comOneTimeSlotCons}
	\begin{aligned}
		\sum_{nm\in \mathcal{L} _c}{\sum_{i\in \mathcal{I} _q\backslash I_q}{y_{nm}^{q,i,t}}}+\sum_{nn\in \mathcal{L} _s}{\mathbb{I}}(\sum_{i\in \mathcal{I} _q\backslash I_q}{y_{nn}^{q,i,t}}\ge 1)\le 1, \\
		\forall q\in \mathcal{Q} ,t\in \mathcal{T},
	\end{aligned}
\end{equation}
where $\mathbb{I}( \cdot )$ is the indicator function.

To ensure that traffic flowing in and out of each network node $n$ is equal and consecutively passes the required VNFs in the correct order, flow conservation constraints are essential \cite{wang2020sfcbasedservice, jia2021vnfbasedservice}.
However, the number of time slots needed to deliver service data from the source node to the destination node is uncertain, which poses a challenge in formulating flow conservation constraints.
To overcome this issue, we assume that service $q$ is completed in time slot ${t_q}$, with $1 \le {t_q} \le T$ \footnote{In this paper, we assume that all services can be completed during the configuration period. The network controller can split services spanning multiple periods prior to planning.}.
Then, we formulate the dynamic flow conservation constraints as follows.

\begin{itemize}
	\begin{subequations}
		\label{eq:flowCons}
		\item
		\textbf{Case 1: ($t<{t_q}, i < { I_q -1}$)}

		In this case, the total traffic of $i$-th virtual link goes into node $n$ in time slot $t$ is given by $\sum_{m \in {\cal N}} {{y_{mn}^{q,{i},t}}}$. As for the total traffic that goes out from node $n$, there are three situations.
		First, the $i$-th virtual link ends in time slot $t$ at node $n$ and the $i+1$-th virtual link starts at node $n$ in time slot $t+1$.
		The total traffic of service $q$ going out of node $n$ is given by $\sum_{m \in {\cal N}} {( { y_{nm}^{q,{{i + 1}},t + 1} })}$.
		Second, the $i$-th virtual link ends in time slot $t$ at node $n$ and the $i+1$-th virtual link starts at node $n$ also in time slot $t$. The total traffic of service $q$ that flows out is expressed as $\sum_{m \in {\cal N}} {( { y_{nm}^{q,{{i+1}},t} })}$.
		Third, the $i$-th virtual link does not end in time slot $t$ at node $n$ and the total traffic of service $q$ going out of node $n$ is written as $\sum_{m \in {\cal N}} {( { y_{nm}^{q,{{i}},t + 1} })}$.
		Since only one of these three situations happens, the flow conservation constraint in Case 1 is derived as

		\begin{small}
			\begin{equation}
				\label{eq:flowCons1}
				\begin{aligned}
					\sum\limits_{m \in {\cal N}} {{y_{mn}^{q,{i},t}}}  = \sum\limits_{m \in {\cal N}} {( {y_{nm}^{q,{{i + 1}},t} + y_{nm}^{q,{{i + 1}},t + 1} + y_{nm}^{q,{i},t + 1}})} , \\
					\forall t < {t_q}, i < { I_q -1}, q \in {\cal Q}, n \in {\cal N}.
				\end{aligned}
			\end{equation}
		\end{small}

		\item
		\textbf{Case 2: ($t<{t_q}, i = {{ I_q -1}}$)}

		Since the current virtual link is the last one and the service has not been finished yet, the corresponding flow conservation constraint is given as

		\begin{small}
			\begin{equation}
				\label{eq:flowCons2}
				\begin{aligned}
					\sum\limits_{m \in {\cal N}} {{y_{mn}^{q,i,t}}} = \sum\limits_{m \in {\cal N}} {{y_{nm}^{q,i,t + 1}}} , i = I_q -1, \\
					\forall t < {t_q}, q \in {\cal Q}, n \in {\cal N}.
				\end{aligned}
			\end{equation}
		\end{small}

		\item
		\textbf{Case 3: (${t_q} < t$)}

		There is no traffic of service $q$ in the network after service $q$ is finished, which means

		\begin{small}
			\begin{equation}
				\label{eq:flowCons3}
				\sum\limits_{m \in {\cal N}} {{y_{mn}^{q,i,t}}}  = 0, \forall {t_q} < t, i \le I_q -1, q \in {\cal Q}, n \in {\cal N}.
			\end{equation}
		\end{small}

		\item
		\textbf{Case 4: ($t = {t_q},i < I_q -1$)}

		Since service $q$ is finished in time slot $t$ and the current virtual link is not the last one, flow conservation is kept by the following constraint

		\begin{small}
			\begin{equation}
				\label{eq:flowCons4}
				\begin{aligned}
					\sum\limits_{m \in {\cal N}} {{y_{mn}^{q,i,t}}}  = \sum\limits_{m \in {\cal N}} {{y_{nm}^{q,i + 1,t}}} , t = {t_q}, \\
					\forall i <  I_q -1, q \in {\cal Q}, n \in {\cal N}.
				\end{aligned}
			\end{equation}
		\end{small}

		\item
		\textbf{Case 5: ($t = {t_q},i = I_q -1$)}

		In this case, the last virtual link of service $q$ ends at the destination node $e_q$. The corresponding flow conservation constraint is formulated as

		\begin{small}
			\begin{equation}
				\label{eq:flowCons5}
				\begin{aligned}
					 & \sum\limits_{m \in {\cal N}} {{y_{mn}^{q,i,t}}}  = 1,t = {t_q},i = I_q -1 ,n = {e_q}, \forall q \in {\cal Q},   \\
					 & \sum\limits_{m \in {\cal N}} {{y_{mn}^{q,i,t}}}  = 0,t = {t_q},i = I_q -1 ,n \ne {e_q}, \forall q \in {\cal Q}.
				\end{aligned}
			\end{equation}
		\end{small}

	\end{subequations}
\end{itemize}

Finally, a virtual link of a service flow can pass through link $nm$ in time slot $t$ if and only if link $nm$ is available in time slot $t$, and that is
\begin{equation}
	\label{eq:linkstatusCons}
	\mathbb{I}(\sum\limits_{q \in {\cal Q}} {\sum\limits_{i \in {{{\cal I}_q}\backslash I_q} } {y_{nm}^{q,i,t}} }  \ge 1 ) \le \varphi _{nm}^t, \forall nm \in {\cal L},t \in {\cal T}.
\end{equation}

\subsubsection{Resource Constraints}
For each network node, the computation resource consumption in each time slot incurred by VNF placement and service data processing cannot exceed total resource limitation $C_n$, which means
\begin{equation}
	\label{eq:computationCons}
	\sum\limits_{q \in {\cal Q}} {\sum\limits_{{i} \in {{\cal I}_q}\backslash {I_q} } {x_n^{q,i}( {\beta _i^q + y_{nn}^{q,i,t}{c_q}})} } \le {C_n},\forall n \in {\cal N},t \in {\cal T},
\end{equation}
where $\beta _i^q$ denotes the required computation resource for holding $i$-th VNF of service $q$.


\subsubsection{Formulated Problem}
The end-to-end latency of service $q$ is calculated by
\begin{equation}
	\label{eq:e2elatencyCal}
	T_{e2e}^q = \tau {\sum\limits_{t \in {\cal T}}  {\mathbb{I}( {\sum\limits_{i \in {{{\cal I}_q}\backslash I_q}} {\sum\limits_{nm \in {\cal L}} {y_{nm}^{q,i,t}} } } \ge 1)}}, \forall q \in {\cal Q}.
\end{equation}
The objective of this work is to minimize the average end-to-end latency of all services in the SDSTN by joint VNF placement and routing planning optimization, and the concerned problem is formulated as
\begin{equation}
	\label{eq:primalProblem}
	\begin{aligned}
		(\rm{P1}):  \quad
		\mathop {\min }\limits_{\boldsymbol{\rm{x,y}}} \quad
		 & \frac{1}{Q}\sum\limits_{q \in {\cal Q}} T_{e2e}^q               \\
		\quad \rm{s.t.} \quad
		 & (\text{\ref{eq:comLatency}})-(\text{\ref{eq:computationCons}}).
	\end{aligned}
\end{equation}
It is observed that P1 is an INLP due to the nonlinear constraints and integer variables.
In \cite{jia2021vnfbasedservice,gao2022virtualnetwork,li2022costawaredynamic}, it has been proven that joint VNF placement and routing planning is NP-hard in general, which is nontrivial to solve especially in a time-slotted SDSTN.

\section{BDBC Algorithm for Optimal Solution}
\label{sec:bcbd4exactSolution}

Benders decomposition (BD) is an advanced decomposition method that exploits the structure of integer programming problems and is capable of reducing the size of integer subproblems by fixing specific variables and generating cuts to tighten the solution space.
In this section, we propose a Benders decomposition based branch-and-cut (BDBC) algorithm to address the above problem.
As a prerequisite for using BD, we first transform the primal problem into an ILP.
Subsequently, the ILP is decomposed into a master problem and a subproblem.
To handle the integer variables in the subproblem, a branch-and-bound method combined with BD is designed.

\subsection{Transformation from INLP to ILP}

\subsubsection{The Linearization of Constraint (\ref{eq:comLatency})}
Constraint (\ref{eq:comLatency}) can be described as
\begin{equation}
	\label{eq:trans4comLatency}
	\begin{aligned}
		 & \left\{{
				\begin{array}{*{20}{l}}
					\text{IF } y_{nm}^{q,i,t} = 1, \text{THEN } T_{nm}^{q, l_i,t} \le \tau, \\
					\text{IF } y_{nm}^{q,i,t} = 0, \text{THEN } 0 \le \tau.
				\end{array}
		}\right.                                                                                                     \\
		 & \hspace{0.5in}\forall i \in {\cal I}_q \backslash I_q, q \in {\cal Q}, nm \in {\cal L}_c, t \in {\cal T}.
	\end{aligned}
\end{equation}
The first IF-THEN expression can be equivalently transformed into the following constraint
\begin{equation}
	\label{eq:comLatencyLinearCons}
	\begin{aligned}
		 & T_{nm}^{q, l_i,t} \le \tau + \tau T(1 - y_{nm}^{q,i,t}),                                                  \\
		 & \hspace{0.5in}\forall i \in {\cal I}_q \backslash I_q, q \in {\cal Q}, nm \in {\cal L}_c, t \in {\cal T}.
	\end{aligned}
\end{equation}
For the second IF-THEN expression, it has no impact on solution space and hence can be omitted.

\subsubsection{The Linearization of Constraint (\ref{eq:comOneTimeSlotCons})}
In order to linearize constraint (\ref{eq:comOneTimeSlotCons}), auxiliary variable $f_{nn}^{q,t}$ is introduced, which is defined as
\begin{equation}
	\label{eq:axV4routingwoVNF}
	f_{nn}^{q,t} = \mathbb{I}( {\sum\limits_{i \in {{\cal I}_q}\backslash I_q} {y_{nn}^{q,i,t}}  \ge 1}), \forall nn \in {\cal L}_s,q \in {\cal Q},t \in {\cal T}.
\end{equation}
By substituting (\ref{eq:axV4routingwoVNF}) into constraint (\ref{eq:comOneTimeSlotCons}), we obtain the linearized constraint
\begin{equation}
	\label{eq:routingwoCons}
	\begin{aligned}
		\sum\limits_{nm \in {{\cal L}_c}} {\sum\limits_{i \in {{\cal I}_q}\backslash {I_q}} {y_{nm}^{q,i,t}} }  + \sum\limits_{nn \in {{\cal L}_s}} f_{nn}^{q,t}  \le 1, \forall q \in {\cal Q},t \in {\cal T}.
	\end{aligned}
\end{equation}
Furthermore, formula (\ref{eq:axV4routingwoVNF}) can be transformed into a linear form by Lemma \ref{lm:lemma1}.
\begin{lemma}
	\label{lm:lemma1}
	Formula (\ref{eq:axV4routingwoVNF}) with indicator function can be re-written into a linear form as
	\begin{equation}
		\label{eq:linearizeF}
		\begin{aligned}
			\left\{{
					\begin{array}{*{20}{l}}
						{f_{nn}^{q,t} \in \{0,1\}}
						\\
						{\sum\limits_{i \in {{\cal I}_q}\backslash I_q} {y_{nn}^{q,i,t}}  \le {{L}_q}f_{nn}^{q,t}}
						\\
						{f_{nn}^{q,t} \le \sum\limits_{i \in {{\cal I}_q}\backslash I_q} {y_{nn}^{q,i,t}} }
					\end{array}
				} \right.,
			\forall nn \in {\cal L}_s,q \in {\cal Q},t \in {\cal T}.
		\end{aligned}
	\end{equation}
\end{lemma}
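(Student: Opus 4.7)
The plan is to prove Lemma \ref{lm:lemma1} via the standard big-$M$ linearization argument, showing that the three linear constraints in (\ref{eq:linearizeF}) force $f_{nn}^{q,t}$ to take exactly the value of the indicator in (\ref{eq:axV4routingwoVNF}), and conversely that any assignment respecting (\ref{eq:axV4routingwoVNF}) satisfies the three constraints. Equivalence of the two formulations under all binary assignments of $y_{nn}^{q,i,t}$ is sufficient, since $f_{nn}^{q,t}$ appears nowhere else in the problem except through constraints that already use it as a binary quantity.

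First I would handle the forward direction by a two-case analysis on the value of $S := \sum_{i \in {\cal I}_q \backslash I_q} y_{nn}^{q,i,t}$. If $S = 0$, then the third inequality $f_{nn}^{q,t} \le S$ combined with $f_{nn}^{q,t} \in \{0,1\}$ forces $f_{nn}^{q,t} = 0$, matching $\mathbb{I}(S \ge 1) = 0$. If $S \ge 1$, then the second inequality $S \le L_q f_{nn}^{q,t}$ forces $f_{nn}^{q,t} \ge 1/L_q > 0$, and binarity upgrades this to $f_{nn}^{q,t} = 1$, again matching the indicator. For the reverse direction, if $f_{nn}^{q,t}$ is set equal to $\mathbb{I}(S\ge 1)$, then: when $S=0$ the second inequality becomes $0 \le 0$ and the third becomes $0\le 0$; when $S\ge 1$ the second becomes $S \le L_q$ and the third becomes $1 \le S$, both of which hold.

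The only non-routine step, and the one I would foreground, is verifying that $L_q$ is a valid big-$M$ upper bound on $S$. Since $y_{nn}^{q,i,t}\in\{0,1\}$ and the index set ${\cal I}_q\setminus I_q$ has cardinality $I_q-1$, which by the definition ${\cal L}_q=\{l_i \mid i\in{\cal I}_q\setminus I_q\}$ is exactly $L_q$, we get $S\le L_q$ trivially. This ensures that the constraint $S\le L_q f_{nn}^{q,t}$ is never over-restrictive when $f_{nn}^{q,t}=1$, so no feasible assignment of the original formulation is cut off. I would close the proof by noting that the integrality of $f_{nn}^{q,t}$ is essential: if it were relaxed to $[0,1]$, the third inequality alone could admit fractional values whenever $S\ge 1$, which would break the equivalence.
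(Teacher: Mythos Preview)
Your proposal is correct and follows essentially the same approach as the paper: a case analysis verifying that the three linear constraints in (\ref{eq:linearizeF}) are equivalent to the indicator definition (\ref{eq:axV4routingwoVNF}), together with the observation that $L_q$ is a valid big-$M$ bound on $\sum_{i \in {\cal I}_q\backslash I_q} y_{nn}^{q,i,t}$ so no feasible point is cut off. The only cosmetic difference is that the paper splits cases on the value of $f_{nn}^{q,t}$ while you split first on $S$ and then check the reverse direction; your write-up is slightly more explicit but the underlying argument is identical.
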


\begin{proof}
	The first sub-constraint in (\ref{eq:linearizeF}) indicates that $f_{nn}^{q,t}$ is either 0 or 1, which is consistent with (\ref{eq:axV4routingwoVNF}). When $f_{nn}^{q,t}=0$, (\ref{eq:linearizeF}) results in $\sum_{i \in {{\cal I}_q}\backslash I_q} {y_{nn}^{q,i,t}}=0$, which is the same as (\ref{eq:axV4routingwoVNF}).
	When $f_{nn}^{q,t}=1$, compared to (\ref{eq:axV4routingwoVNF}), although (\ref{eq:linearizeF}) additionally results in
	$\sum_{i \in {{\cal I}_q}\backslash I_q} {y_{nn}^{q,i,t}} \le {L}_q$ except $1 \le \sum_{i \in {{\cal I}_q}\backslash I_q} {y_{nn}^{q,i,t}}$, $\sum_{i \in {{\cal I}_q}\backslash I_q} {y_{nn}^{q,i,t}} \le {L}_q$ has no impact on primal solution space, since ${L}_q$ is the maximum number of virtual links of service $q$.
\end{proof}

\subsubsection{The Linearization of Constraint (\ref{eq:flowCons})}
First, auxiliary variable $h_q^t$ is introduced, which is defined as
\begin{equation}
	\label{eq:threePartLinear}
	h_q^t =
	\begin{cases}
		1, & \text{service } q \text { is completed in time slot } t, \\
		0, & \text{otherwise. }
	\end{cases}
\end{equation}
and satisfies
\begin{equation}
	\label{eq:auxTimeDivide}
	\sum\limits_{t \in {\cal T}} {h_q^t}  = 1, \forall q \in {\cal Q}.
\end{equation}
Second, another auxiliary binary variable $k_q^t$ is introduced and given by
\begin{equation}
	\label{eq:middleAuxiliary}
	k_q^t = \mathbb{I}(\sum\limits_{t' \le t} {h_q^{t'}} = 1), \forall t \in {\cal T}, q \in {\cal Q}.
\end{equation}
It is obvious that $k_q^t = 0$ when service $q$ has not been finished in time slot $t$.
Similar to constraint (\ref{eq:axV4routingwoVNF}), formula (\ref{eq:middleAuxiliary}) can be linearized as
\begin{equation}
	\label{eq:linearK}
	\left\{ {\begin{array}{*{20}{l}}
				{k_q^t \in \{0,1\}}                              \\
				{\sum\limits_{t' \le t} {h_q^{t'}}  \le T k_q^t} \\
				{k_q^t \le \sum\limits_{t' \le t} {h_q^{t'}} }
			\end{array}} \right.,\forall q \in {\cal Q},t \in {\cal T}.
\end{equation}
Consequently, the three categories of time slots for service $q$ in (\ref{eq:flowCons}) during the configuration period can be represented as given below.
\begin{subequations}
	\begin{align}
		\begin{split}
			&t<t_q\leftrightarrow 1-k_{q}^{t}=1,
		\end{split}
		\\
		\begin{split}
			&t>t_q\leftrightarrow k_{q}^{t}-h_{q}^{t}=1,
		\end{split}
		\\
		\begin{split}
			&t=t_q\leftrightarrow h_{q}^{t}=1.
		\end{split}
	\end{align}
\end{subequations}

Then, the big-$M$ method is utilized to linearize constraint (\ref{eq:flowCons}), which is given by

\begin{small}
	\begin{subequations}
		\label{eq:flowConsLinear}
		\begin{align}
			\begin{split}
				&\begin{aligned}
					0\le \sum_{m\in \mathcal{N}}{\left( y_{mn}^{q,i,t}-y_{nm}^{q,i+1,t}-y_{nm}^{q,i,t+1}-y_{nm}^{q,i+1,t+1} \right)}\le k_{q}^{t}, \\
					\forall t\in \mathcal{T} \backslash T,i\in \mathcal{I} _q\backslash \{I_q-1,I_q\},
				\end{aligned}
			\end{split}
			\\
			\begin{split}
				0\le \sum_{m\in \mathcal{N}}{\left( y_{mn}^{q,i,t}-y_{nm}^{q,i,t+1} \right)}\le k_{q}^{t},\forall t\in \mathcal{T} \backslash T,i=I_q-1,
			\end{split}
			\\
			\begin{split}
				0\le \sum_{m\in \mathcal{N}}{y_{mn}^{q,i,t}}\le (1-(k_{q}^{t}-h_{q}^{t})),\forall t\in \mathcal{T} ,i\in \mathcal{I} _q\backslash I_q,
			\end{split}
			\\
			\begin{split}
				\begin{aligned}
					 & 0\le \sum_{m\in \mathcal{N}}{\left( y_{mn}^{q,i,t}-y_{nm}^{q,i+1,t} \right)}\le (1-h_{q}^{t}), \\
					 & \hspace{1.5in}\forall t\in \mathcal{T} ,i\in \mathcal{I} _q\backslash \{I_q-1,I_q\},
				\end{aligned}
			\end{split}
			\\
			\begin{split}
				\begin{aligned}
					 & 0\le \sum_{m\in \mathcal{N}}{\left( y_{mn}^{q,i,t}-1 \right)}\le (1-h_{q}^{t}), \\
					 & \hspace{1.5in}\forall t\in \mathcal{T} ,i=I_q-1,n=e_q,
				\end{aligned}
			\end{split}
			\\
			\begin{split}
				0\le \sum_{m\in \mathcal{N}}{y_{mn}^{q,i,t}}\le (1-h_{q}^{t}),\forall t\in \mathcal{T} ,i=I_q-1,n\ne e_q.
			\end{split}
		\end{align}
	\end{subequations}
\end{small}

\subsubsection{The Linearization of Constraint (\ref{eq:linkstatusCons})}
Constraint (\ref{eq:linkstatusCons}) can be transformed into a linear form by Lemma \ref{lm:lemma2}, and the equivalence between (\ref{eq:linkstatusCons}) and (\ref{eq:linearizeVar}) can be proved in a similar way of Lemma \ref{lm:lemma1}.
\begin{lemma}
	\label{lm:lemma2}
	Constraint (\ref{eq:linkstatusCons}) can be linearized as
	\begin{equation}
		\label{eq:linearizeVar}
		\sum\limits_{q \in {\cal Q}} {\sum\limits_{i \in {{\cal I}_q}\backslash I_q} {y_{nm}^{q,i,t}} }  \le \sum\limits_{q \in {\cal Q}} {{{ L}_q}} \varphi _{nm}^t,\forall nm \in {\cal L},t \in {\cal T}.
	\end{equation}
\end{lemma}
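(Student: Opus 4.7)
My plan is to mirror the case analysis used in the proof of Lemma~\ref{lm:lemma1}, exploiting the fact that $\varphi_{nm}^t$ is a binary parameter fixed in advance (computed from the predictable satellite trajectories). I would split the argument into the two possible values of $\varphi_{nm}^t$ and verify that, in each case, (\ref{eq:linkstatusCons}) and (\ref{eq:linearizeVar}) carve out exactly the same set of feasible assignments of $\{y_{nm}^{q,i,t}\}$.

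First, consider the case $\varphi_{nm}^t = 0$. The original constraint (\ref{eq:linkstatusCons}) forces the indicator function to equal $0$, which in turn forces $\sum_{q}\sum_{i \in {\cal I}_q\backslash I_q} y_{nm}^{q,i,t} = 0$, so every relevant $y$-variable on link $nm$ in time slot $t$ must vanish. The linearized form (\ref{eq:linearizeVar}) reduces to $\sum_{q}\sum_{i \in {\cal I}_q\backslash I_q} y_{nm}^{q,i,t} \le 0$ which, combined with $y_{nm}^{q,i,t} \in \{0,1\}$, yields the same conclusion. Hence the two constraints coincide in this case.

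Next, for $\varphi_{nm}^t = 1$, constraint (\ref{eq:linkstatusCons}) is trivially satisfied since the indicator is at most $1$. I would then show that (\ref{eq:linearizeVar}) is likewise nonbinding on the primal feasible set: its right-hand side equals $\sum_{q \in {\cal Q}} L_q$, and because each $y_{nm}^{q,i,t}$ is binary while the double sum on the left contains exactly $\sum_{q \in {\cal Q}} L_q$ terms (recall $L_q = |{\cal I}_q \backslash I_q|$ by the definition of virtual links in Section~\ref{sec:systemModel}), the inequality holds automatically. Thus no additional feasible assignments are excluded, establishing equivalence.

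The main obstacle, if there is one, is to make the counting argument in the $\varphi_{nm}^t = 1$ case watertight, namely to confirm that $\sum_{q \in {\cal Q}} L_q$ is a valid (and the tightest useful) upper bound on the left-hand sum so that the linearization introduces no spurious restriction. Once this bookkeeping is verified, the argument reduces to the same case-splitting template as Lemma~\ref{lm:lemma1} and requires no further technical work.
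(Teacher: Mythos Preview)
Your proposal is correct and follows exactly the template the paper intends: the paper does not give a standalone proof but simply states that the equivalence ``can be proved in a similar way of Lemma~\ref{lm:lemma1},'' and your case split on $\varphi_{nm}^t\in\{0,1\}$ together with the counting argument that the left-hand sum has precisely $\sum_{q\in\mathcal{Q}}L_q$ binary terms is precisely that same argument.
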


\subsubsection{The Linearization of Constraint (\ref{eq:computationCons})}
To linearize constraint (\ref{eq:computationCons}), variable $b_n^{q,i,t} = x_n^{q,i}y_{nn}^{q,i,t}$ is introduced.
Then, constraint (\ref{eq:computationCons}) is transformed into
\begin{equation}
	\label{eq:computationLinearizedCons}
	\sum\limits_{q \in {\cal Q}} {{\sum _{{i} \in {{\cal I}_q}\backslash {I_q}}}(x_n^{q,i}\beta _i^q + b_n^{q,i,t}{c_q})}  \le {C_n},\forall n \in {\cal N},t \in {\cal T}.
\end{equation}
Meanwhile, $b_n^{q,i,t}$, $x_n^{q,i}$, and $y_{nn}^{q,i,t}$ should satisfy
\begin{equation}
	\label{eq:bntCons}
	\begin{aligned}
		b_n^{q,i,t} \le x_n^{q,i}, b_n^{q,i,t} \le y_{nn}^{q,i,t}, b_n^{q,i,t} \ge x_n^{q,i} + y_{nn}^{q,i,t} -1, \\
		b_n^{q,i,t} \in \{0,1\}, \forall q \in {\cal Q}, i \in {{\cal I}_q}\backslash {I_q}, n \in {\cal N}, t \in {\cal T}.
	\end{aligned}
\end{equation}

\subsubsection{The Linearization of Objective (\ref{eq:e2elatencyCal})}
Auxiliary variable $a_q^t$ is introduced to linearize (\ref{eq:e2elatencyCal}), which is defined as
\begin{equation}
	\label{eq:objAuxiValue}
	a_q^t = {{\mathbb{I}( {\sum\limits_{i \in {{{\cal I}_q}\backslash I_q}} {\sum\limits_{nm \in {\cal L}} {y_{nm}^{q,i,t}} } } \ge 1)}}, \forall q \in {\cal Q}.
\end{equation}
Subsequently, the indicator function in (\ref{eq:objAuxiValue}) can be transformed into a linear form by referencing Lemma \ref{lm:lemma1}, which is given as
\begin{equation}
	\label{eq:linearizedA}
	\begin{aligned}
		\left\{ {\begin{array}{*{20}{l}}
					         {a_q^t \in \{0,1\}}                                                                                                   \\
					         \sum\limits_{i \in {{\cal I}_q}\backslash I_q} {\sum\limits_{nm \in {\cal L}} {y_{nm}^{q,i,t}} }  \le {{L_q}} L a_q^t \\
					         {a_q^t \le \sum\limits_{i \in {{\cal I}_q}\backslash I_q} {\sum\limits_{nm \in {\cal L}} {y_{nm}^{q,i,t}} } }
				         \end{array}} \right.,\forall q \in {\cal Q}, t \in {\cal T}.
	\end{aligned}
\end{equation}

\subsubsection{ILP Formulation of Primal Problem}

As a result, problem P1 can be reformulated as an ILP as follows.
\begin{equation}
	\begin{aligned}
		({\rm{P2}}):  \quad
		\mathop {\min } \limits_{{\cal {Z}}} \quad
		 & \frac{1}{Q}\sum\limits_{q \in {\cal Q}} T_{e2e}^q                                                                                                         \\
		{\rm{s.t.}} \quad
		 & (\ref{eq:CompLatencyCons})-(\ref{eq:routingVNF}), (\ref{eq:comLatencyLinearCons}), (\ref{eq:routingwoCons})-(\ref{eq:auxTimeDivide}), (\ref{eq:linearK}), \\
		 & (\ref{eq:flowConsLinear})-(\ref{eq:bntCons}), (\ref{eq:linearizedA}),
	\end{aligned}
\end{equation}
where $\cal {Z} = \{\boldsymbol{\rm{x,y,f,h,k,a,b}}\}$.
Note that the computational complexity of an exhaustive search for the optimal solution of P2 is determined by the number of integer variables and the number of constraints.
Actually, we can find that the number of binary integer variables in P2 is

\begin{small}
	\begin{equation}
		\begin{aligned}
			V
			 & = N\sum\limits_{q \in {\cal Q}} {{I_q}}  + (L + N)T\sum\limits_{q \in {\cal Q}} {({I_q} - 1)}  + (3 + L)QT \\
			 & \cong LT\sum\limits_{q \in {\cal Q}} {({I_q} - 1)},
		\end{aligned}
	\end{equation}
\end{small}
and the number of constraints is
\begin{equation}
	\begin{aligned}
		C =
		      & QN + \sum\limits_{q \in {\cal Q}} {{I_q}}  + 4Q + QT(3N + 2L + 3) + LT                             \\
		      & + N\sum\limits_{q \in {\cal Q}} {({I_q} - 2)}  + NT + 4NT\sum\limits_{q \in {\cal Q}} ({I_q}  - 1) \\
		\cong & TL(1 + 2Q) ,
	\end{aligned}
\end{equation}
where $\cong$ denotes the approximation operation.
It is obvious that the exhaustive search with computational complexity $\mathcal{O}(V \cdot C \cdot 2^{V})$ \cite{jia2021vnfbasedservice} is unpractical for P2.

\subsection{Details of BDBC Algorithm}
BD is an efficient methodology to solve optimization problems with mixed variables, such as mixed-integer problems, by decomposing the primal problem into an ILP master problem and an LP subproblem \cite{benders2005partitioningprocedures}.
By solving the subproblem in each iteration of BD, information about the solution to the master problem can be obtained, which can be used to further tighten the solution space of the master problem.
Specifically, if the subproblem is infeasible, Benders cuts are generated and added to the master problem in the next iteration until both the subproblem and master problem converge to the optimal solution.
Due to the linear objective function of P2, which only depends on $a_q^t$, we can intuitively decompose P2 using BD into a master problem for routing planning and a subproblem for VNF placement.
However, the original BD algorithm requires the variables in subproblems to be continuous, which is not applicable in our case.

To overcome this issue, we refer to works that handle the integer variables of subproblem in BD \cite{sherali2002modificationbenders, sen2006decompositionbranchandcut, fakhri2017bendersdecomposition}, and then present a BDBC algorithm.
Specifically, a branch-and-bound tree is built to deal with the integer variables in the subproblem, and the BD algorithm is used to iteratively solve the master problem and relaxed subproblem for each node on the tree.
Since the objective of P2 is independent of the variables of the subproblem, it implies that the subproblem aims to find a feasible VNF placement on a given routing path.
Therefore, we can take the average VNF placement load at network nodes as the objective to generate the feasibility cuts, which does not affect the optimality of the solution to P2.
To save space, the details of the BD algorithm are omitted, and readers can refer to \cite{wolsey2021integerprogrammin}.
The master problem (MP) and relaxed subproblem (RSP) after decomposition are given as follows.
\begin{equation}
	\begin{aligned}
		(\rm{MP}):  \quad
		\mathop {\min } \limits_{{\cal {Z'}}} \quad
		 & \frac{1}{Q}\sum\limits_{q \in {\cal Q}} T_{e2e}^q                                                                                                                                                               \\
		\rm{s.t.} \quad
		 & (\ref{eq:ingressLinkCons}), (\ref{eq:comLatencyLinearCons}), (\ref{eq:routingwoCons})-(\ref{eq:auxTimeDivide}), (\ref{eq:linearK}), \\
		 & (\ref{eq:flowConsLinear}), (\ref{eq:linearizeVar}), (\ref{eq:linearizedA}),
	\end{aligned}
\end{equation}
where $\cal {Z'} = \{\boldsymbol{\rm{y,a,f,h,k}}\}$.
\begin{equation}
	\begin{aligned}
		(\rm{RSP}):  
		\mathop {\min } \limits_{{\boldsymbol{\rm{x}}}} \quad
		 & \sum\limits_{n \in {\cal N}} {\frac{1}{{{C_n}}}(\sum\limits_{q \in {\cal Q}} {\sum\limits_{i \in {I_q}} {x_n^{q,i}\beta _q^i} } )} \\
		\rm{s.t.} \quad
		 & (\ref{eq:CompLatencyCons}), (\ref{eq:vnfEmbedCons}), (\ref{eq:routingVNF}), (\ref{eq:computationCons}),                            \\
		 & 0 \le x^{q,i}_n \le 1, \forall q \in {\cal Q}, i \in {\cal I}_q, n \in {\cal N}.
	\end{aligned}
\end{equation}
Since $\boldsymbol{\rm{x}}$ obtained from solving RSP may not be an integer solution, which is not feasible for P2, a search tree is built and maintained for branch-and-bound algorithm based on $\boldsymbol{\rm{x}}$.
According to \cite{fakhri2017bendersdecomposition}, the Benders cuts generated at a node of the search tree are also effective for child nodes grown from this node.
Hence, the child node in the tree is able to inherit the Benders cuts of all the direct ancestor nodes to further tighten the solution space of the master problem, which is the basis of the branch-and-cut in the BDBC algorithm.

\begin{algorithm}[thbp]
	\begin{footnotesize}
		\DontPrintSemicolon
		\caption{Benders decomposition based branch-and-cut (BDBC) algorithm}
		\label{alg:BDBC}
		Create an empty tree node list $\mathbb{T}$, and initiate upper bound $\mathbf{\overline{B}} = \infty$ and lower bound $\mathbf{\underline{B}} = -\infty$\;
		Use BD algorithm to solve MP and RSP to obtain the optimal value $\mathbf{{V}}$, and pack Benders cuts generated during BD iterations together with $\mathbf{{V}}$ and the solution $\boldsymbol{\rm{x}}$ as a node, which is then added to $\mathbb{T}$\;
		\While{$\mathbb{T} \ne \emptyset$}{
		Select a node with Benders cuts, optimal value $\mathbf{{V}}$ and solution $\boldsymbol{\rm{x}}$ from $\mathbb{T}$\;
		Set $\mathbf{\underline{B}} = \mathbf{{V}}$\;
		\If{$\mathbf{\overline{B}} - \mathbf{\underline{B}} > e$}{
		\If{$\boldsymbol{\rm{x}}$ is integer}{
			Update the upper bound $\mathbf{\overline{B}} = \mathbf{{V}}$\;
			Remove the selected node from $\mathbb{T}$\;
		}
		\Else{
		{$/*$ \textit{Branch} $*/$}\;
		Branch on a non-integer variable $x'_{ik}$ in $\boldsymbol{\rm{x}}$ according to $\{{x_{ik}} \le \lfloor {x'_{ik}} \rfloor, {x_{ik}} \ge \lfloor {x'_{ik}} +1 \rfloor \}$\;
		/* \textit{Parallel Computing} */ \;
		\For{$\textbf{C} \in \{{x_{ik}} \le \lfloor {x'_{ik}} \rfloor, {x_{ik}} \ge \lfloor {x'_{ik}} +1 \rfloor \}$}{
		Inherit all the Benders cuts of the selected node to construct AMP, and add the constraint $\textbf{C}$ to construct ARSP\;
		Use the BD algorithm to solve the AMP and the ARSP to obtain the optimal value $\mathbf{{V}}'$\;
		Pack the Benders cuts generated during the BD iterations and the Benders cuts inherited from the ancestor nodes together with $\mathbf{{V}}'$ and the solution $\boldsymbol{\rm{x}}'$ as a node, which is then added to $\mathbb{T}$\;
		}
		Remove the selected node from $\mathbb{T}$\;
		}
		}
		\Else{
			Remove the selected node from $\mathbb{T}$\;
		}
		}
		$\mathbf{{B}^*} = \mathbf{\overline{B}}$\;
	\end{footnotesize}
\end{algorithm}

Considering branching for the non-integer ${x'_{ik}}$ of the solution of node $k$ in the search tree, the left branch is set as ${x_{ik}} \ge \lfloor {x'_{ik}} +1 \rfloor$ which indicates ${x_{ik}}=1$, and the right branch is set as ${x_{ik}} \le \lfloor {x'_{ik}} \rfloor$ which indicates ${x_{ik}}=0$, where $ik$ is the index of the branching value in the solution $\boldsymbol{\rm{x}}$ of node $k$.
Then, the augmented relaxed subproblem of node $k$ in the search tree, ARSP($k$), is given as
\begin{equation}
	\begin{aligned}
		(\text{ARSP}{(k)}):
		\mathop {\min } \limits_{{\bf{x}}}
		 & \sum\limits_{n \in {\cal N}} {\frac{1}{{{C_n}}}(\sum\limits_{q \in {\cal Q}} {\sum\limits_{i \in {{\cal I}_q}} {x_n^{q,i}\beta _q^i} } )} \\
		\rm{s.t.} \quad
		 & (\ref{eq:CompLatencyCons}), (\ref{eq:vnfEmbedCons}), (\ref{eq:routingVNF}), (\ref{eq:computationCons}),                                   \\
		 & x_{ik} \ge \lfloor {x'_{ik}} +1\rfloor, \forall k \in {\cal S}_k^1,                                                                       \\
		 & x_{ik} \le \lfloor {x'_{ik}} \rfloor, \forall k \in {\cal S}_k^2,                                                                         \\
		 & 0 \le x^{q,i}_n \le 1, \forall q \in {\cal Q}, i \in {\cal I}_q, n \in {\cal N},
	\end{aligned}
\end{equation}
where $ {\cal S}_k^1$ is the set of the direct ancestors of node $k$ in the search tree on the left branch, and ${\cal S}_k^2$ is the set of the direct ancestors of node $k$ in the search tree on the right branch.
By solving the dual problem of ARSP($k$), we can obtain the dual variables $\boldsymbol{\rm{\lambda}}$, $\boldsymbol{\rm{\alpha}}$, $\boldsymbol{\rm{\gamma}}$, $\boldsymbol{\rm{\theta}}$, $\boldsymbol{\rm{u}}$, and $\boldsymbol{\rm{v}}$, which correspond to (\ref{eq:CompLatencyCons}), (\ref{eq:vnfEmbedCons}), (\ref{eq:routingVNF}), (\ref{eq:computationCons}), the left and right branching operations, respectively.

We assume that the extreme points and the extreme rays of the dual problem of ARSP($k$) are ${\cal E}_k^p$ and ${\cal E}_k^r$, respectively.
Since the primal subproblem is a feasibility problem, only the feasibility cuts are considered.
Then, the following Benders cuts are generated by node $k$ for the master problem after the BD algorithm for node $k$ is completed.
\begin{equation}
	F_\text{y}(\boldsymbol{\rm{\lambda,\alpha,\gamma,\theta,u,v}}) \le 0, \forall (\boldsymbol{\rm{\lambda,\alpha,\gamma,\theta,u,v}}) \in {\cal E}_k^r.
\end{equation}
Similarly, the Benders cuts generated by node $k'$, which is one direct ancestor of node $k$, can be inherited by node $k$ with the following forms,
\begin{equation}
	F_\text{y}(\boldsymbol{\rm{\lambda,\alpha,\gamma,\theta,u',v'}}) \le 0, \forall (\boldsymbol{\rm{\lambda,\alpha,\gamma,\theta,u',v'}}) \in {\hat{{\cal E}}}_{k'}^r,
\end{equation}
where $\hat{{\cal E}}_{k'}^r$ is the modified version of the extreme rays of the dual problem of ARSP($k'$), which is denoted as ${{\cal E}}_{k'}^r$.
Specifically, $\boldsymbol{u'}$ and $\boldsymbol{v'}$ in $\hat{{\cal E}}_{k'}^r$ are modified from $\boldsymbol{u}$ and $\boldsymbol{v}$ in ${{\cal E}}_{k'}^r$ by appending $h(k)-h(k')$ number of zeros, where $h(k)$ is the height of node $k$ in the search tree.
The Benders cuts of other ancestor nodes are able to be inherited in the same way.
Then, the augmented master problem (AMP) of node $k$, which inherits Benders cuts from all the direct ancestor nodes in the search tree, can be constructed by combining the inherited cuts and the master problem.
After obtaining AMP and ARSP, the BD algorithm is used again and generates a new node to the search tree for subsequent iterations until the BDBC algorithm converges to the optimal solution.
The details of BDBC algorithm are provided in Algorithm \ref{alg:BDBC}.

\begin{lemma}
	\label{lm:lemma4}
	The BDBC algorithm is able to converge to the optimal solution of the primal problem within finite iterations.
\end{lemma}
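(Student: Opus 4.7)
The plan is to prove Lemma 4 by combining two classical finiteness arguments: finite convergence of Benders decomposition at each node of the search tree, and finiteness of the branch-and-bound tree itself. I would structure the argument in three stages that together yield optimality in finitely many iterations.

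First, I would show that for any fixed tree node $k$, the inner BD procedure applied to AMP and ARSP$(k)$ terminates finitely. Since ARSP$(k)$ is a linear program in the continuous relaxation of $\mathbf{x}$, its dual feasible region is a polyhedron with finitely many extreme points and extreme rays. Each BD iteration either closes the gap between the master and subproblem bounds or adds a feasibility/optimality cut corresponding to a previously unused extreme ray or extreme point. Because cuts are never repeated and there are only finitely many such generators, the inner BD loop must terminate after finitely many iterations at either a certificate of infeasibility or a pair $(\mathbf{y},\mathbf{x})$ attaining the node's optimal objective. This is the standard convergence argument for Benders decomposition, which I would invoke by reference rather than reprove.

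Second, I would argue that the outer search tree is finite. Branching is performed on fractional components of the binary vector $\mathbf{x}$, and there are only $N\sum_{q\in\mathcal{Q}}I_q$ such components. Along any root-to-leaf path, each branching step fixes one $x_n^{q,i}$ to $0$ or $1$, so the depth of the tree is bounded by this quantity and the total number of nodes is at most $2^{N\sum_{q\in\mathcal{Q}}I_q}+1$. Consequently $\mathbb{T}$ is eventually emptied after finitely many pops. Moreover, because inherited Benders cuts from ancestors remain valid for descendants via the zero-padding construction of $\mathbf{u}'$ and $\mathbf{v}'$ in $\hat{\mathcal{E}}^{r}_{k'}$, pruning by the $\overline{B}-\underline{B}\le e$ rule never discards an integer-feasible solution that beats the current incumbent. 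Third, I would close the argument with the usual bounding logic: $\overline{B}$ is updated only at integer-feasible solutions of P2 and is therefore monotone non-increasing and always an upper bound on the optimum, while $\underline{B}$ at each selected node is a valid lower bound on that subtree by LP relaxation together with the inherited cuts. Once $\mathbb{T}$ is empty, every branch has been either explored to integrality or pruned by a bound no smaller than the incumbent, so $\overline{B}=B^{*}$ equals the optimum of P2.

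The main obstacle I foresee is the second stage, namely justifying that Benders cuts inherited from ancestor nodes remain valid feasibility cuts at descendants once additional branching constraints $x_{ik}\le\lfloor x'_{ik}\rfloor$ or $x_{ik}\ge\lceil x'_{ik}\rceil$ are appended to ARSP. This relies on the fact that appending such bounds only restricts the primal feasible region, so every dual extreme ray of an ancestor that certified infeasibility of a given $\mathbf{y}$ continues to certify infeasibility after zero-padding the multipliers associated with the new bound constraints. I would verify this mapping carefully, since it is exactly the property that underpins the branch-and-cut step of the algorithm and therefore the link between the inner and outer finiteness arguments.
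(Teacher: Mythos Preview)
Your proposal is correct and follows essentially the same approach as the paper: both argue finiteness by combining (i) finite convergence of Benders decomposition at each tree node and (ii) finiteness of the branch-and-bound tree over the binary variables $\mathbf{x}$. Your version is considerably more detailed than the paper's brief sketch, which simply cites known convergence results for BD and branch-and-bound and asserts that their combination therefore converges; in particular, your explicit justification of cut inheritance via zero-padding of the dual multipliers and your bounding argument for optimality go beyond what the paper spells out.
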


\begin{proof}
	The BDBC algorithm is built by integrating the BD algorithm into the branch and bound algorithm.
	Specifically, a search tree for branch and bound algorithm is built and maintained, and each node of the tree performs BD algorithm.
	Since both the BD algorithm and branch-and-bound algorithm are able to converge after finite iterations \cite{cordeau2000bendersdecomposition, huang2019branchandcutbenders}, it is intuitive that the BDBC algorithm is also able to converge after finite iterations.
\end{proof}

\begin{figure}[tbp]
	\centering
	\includegraphics[width=0.48\textwidth]{./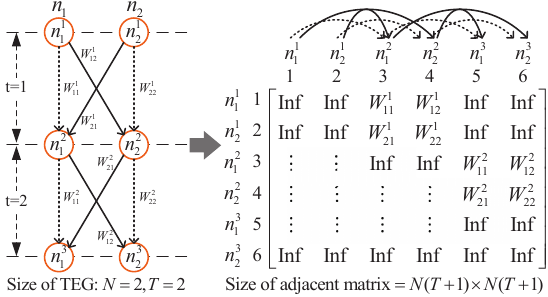}
	\caption{An example of adjacency matrix constructed from TEG model.}
	\label{fig:des_adjacentMatrix}
\end{figure}

\section{Time Expansion-based Heuristic Algorithm}
\label{sec:greedySolution}

\begin{algorithm}[!htbp]
	\begin{footnotesize}
		\DontPrintSemicolon
		\caption{TEDG algorithm}
		\label{alg:TEDG}
		\textbf{Initialization:} ${\cal G} = \left( {{\cal N},{\cal L}} \right)$ with ${{C_n}}$ for each network node $n$, $\varphi _{nm}^t$ and ${R_{nm}^t}$ for each link $nm$; Services $\cal Q$ with ${\delta _q}$, ${c_q}$, ${s_q}$, ${e_q}$, and ${{\cal I}_q}$ for each $q$\;

		\For{each $q$ in $\cal Q$}{
			$T_{min}^q \Leftarrow$ Calculate the minimum required computation time slots by (\ref{eq:minComputationTimeslot})\;
			Set $t = T_{min}^q$\;
			Set $serviceStatus=0$\;
			${id}_s^q \Leftarrow$ Get the index of source node of service $q$\;
			$n_e^q \Leftarrow$ Get the index of the destination node of service $q$\;
			/* \textit{Time expansion for service planning} */\;
			\While{true}{
				\If{$t>T$}{
					Discard service $q$\;
					Break\;
				}
				Set ${id}_e^q = n_e^q + N \times t$ and update the index of the destination node in the adjacency matrix\;
				${\rm A}_q^t \Leftarrow$ Construct the adjacency matrix with the length of time slots $t$ using Algorithm \ref{alg:adjacentMatrixConst}\;
				${\cal P}_q^t \Leftarrow$ Generate the potential path set according to ${\rm A}_q^t$ using Algorithm \ref{alg:pathSetGeneration}\;
				\If{${\cal P}_q^t = \emptyset$}{
					$t = t+1$\;
					Continue\;
				}
				\For{each path $p_q^t$ in ${\cal P}_q^t$}{
					$vnfFlag \Leftarrow$ Try to deploy all VNFs on $p_q^t$ using Algorithm \ref{alg:vnfDeployment}\;
					\If{$vnfFlag$}{
						$serviceStatus=1$\;
						Update the communication resource status of communication links in $p_q^t$\;
						Break\;
					}
				}
				\If{$serviceStatus$}{
					Break\;
				}
				\Else{
					$t = t+1$\;
				}
			}
		}
	\end{footnotesize}
\end{algorithm}

The BDBC algorithm is still computationally intensive due to the ILP master problem and the branch-and-bound tree.
To reduce computational complexity for practical use, we propose TEDG algorithm by considering the characteristics of the TEG model and the optimization objective of the primal problem.

\subsection{Details of TEDG Algorithm}
The detailed process of the TEDG algorithm is presented in Algorithm \ref{alg:TEDG}.
It is assumed that the dynamics of SDSTN topology in future $T$ time slots are known in advance at the start of Algorithm \ref{alg:TEDG}, due to the predictability of satellite movement trajectories.
In addition, all service requirements are collected by network controllers for on-demand scheduling.
Note that Algorithm \ref{alg:TEDG} decouples routing planning and VNF placement into two stages, where lines 13-18 generate potential service flow paths, and lines 19-24 perform VNF placement.
For the path search, the TEG needs to be converted into an adjacency matrix without losing the dynamic characteristics of network topology under resource constraints.
The adjacency matrix describes the TEG of $T$ time slots, where each dimension represents the index of all the nodes in the TEG, given by $[1, 2, \dots, N, \dots, N \times (T+1)]$.
For example, in each dimension of the adjacency matrix, the index of node $n \in {\cal N}$ at time slot $t$ in the TEG is $(n+N \times (t-1))$.
Fig. \ref{fig:des_adjacentMatrix} shows the transformation of a two-slot TEG into an adjacency matrix, where each weight represents the corresponding link cost.
In particular, line 2 of Algorithm \ref{alg:adjacentMatrixConst} indicates that except for feasible links in the TEG, the cost of the other links is infinite.

\begin{algorithm}[tbp]
	\begin{footnotesize}
		\caption{Adjacency matrix construction}
		\label{alg:adjacentMatrixConst}
		\DontPrintSemicolon
		\SetKwInOut{Input}{Input}
		\SetKwInOut{Output}{Output}

		\Input{The TEG of the network ${\cal G} = \left( {{\cal N},{\cal L}} \right)$, $t$, $\tau$, information of service $q$}
		\Output{${\rm A}_q^t$}
		\BlankLine
		Set $S_q^t = N \times (t + 1) $ and get the size of adjacency matrix ${\rm A}_q^t$ based on the length of time slots $t$\;
		Initialize each element of ${\rm A}_q^t$ as $\infty$\;
		\For{$t' = 1:t$}{
		\For{each link $nm$ in ${\cal L}$}{
		$n^{t'}_{id}=(t' - 1) \times N + n$ and set the index of node $n$ at the ending of time slot $t'-1$\;
		$m^{t'}_{id}=t' \times N + m$ and set the index of node $m$ at the ending of time slot $t'$\;

		\If{$\varphi_{nm}^{t'} = 1$ and $n = m$}{
		\If{\begingroup\makeatletter\def\f@size{6.5}\check@mathfonts
		${{min(\{C_n^{t'}|{t'\le t}\}) \ge (min(\{\beta_i^q|i \in {\cal I}_q\}) \times \tau + c_q)}}$\endgroup}{
		\If{\begingroup\makeatletter\def\f@size{6.5}\check@mathfonts
		${{min(\{C_n^{t'}|{t'\le t}\}) = max(\{C_n^{t'}|{t'\le t}\})}}$\endgroup}{
		${\rm A}_q^t(n^{t'}_{id},m^{t'}_{id}) = 0.5$\;
		}
		\Else{
		${\rm A}_q^t(n^{t'}_{id},m^{t'}_{id}) = {W^{t'}_{{nn}}}$\;
		}
		}
		\Else{
		${\rm A}_q^t(n^{t'}_{id},m^{t'}_{id}) = 0.9$\;
		}
		}
		\ElseIf{$\varphi_{nm}^{t'} = 1$ and $n \ne m$}{
		\If{$\frac{\delta_q}{R_{nm}^{t'}} \le \tau$}{
		${\rm A}_q^t(n^{t'}_{id},m^{t'}_{id}) = 1$\;
		}
		\Else{
		${\rm A}_q^t(n^{t'}_{id},m^{t'}_{id}) = \infty$\;
		}
		}
		}
		}
	\end{footnotesize}
\end{algorithm}

As a prerequisite for the path search in Algorithm \ref{alg:pathSetGeneration}, it is necessary to normalize the cost of links.
For communication links with sufficient communication resource for data transmission of a given service, the cost is considered uniform and set to 1 in the adjacency matrix; otherwise, the cost is set as infinite.
Regarding stay links, those capable of only store and forward operations have a cost of 0.9, while those with sufficient computation resource have a cost of 0.5.
If the aforementioned conditions are not met, the cost of stay links is normalized using the modified max-min method, which is given by
\begin{equation}
	\label{eq:mmMethod4Staylink}
	W^t_{nn} = {0.9 - \frac{{0.4(C_n^t - \min (\{ C_n^t|t \in {\cal T}\} ))}}{{\max (\{ C_n^t|t \in {\cal T}\} ) - \min (\{ C_n^t|t \in {\cal T}\} )}}},
\end{equation}
where $C_n^t$ denotes the available computation resource of node $n$ in time slot $t$.

Following the construction of the adjacency matrix, Algorithm \ref{alg:pathSetGeneration} is used to generate a set of potential paths, ${\cal P}_q^t$.
Specifically, line 1 of Algorithm \ref{alg:pathSetGeneration} identifies the $k$ shortest paths between the source and destination nodes, which are represented as the set ${\cal R}_q^t$.
To generate a robust path set, Yen's $k$-shortest path algorithm \cite{yen1971findingk} is utilized.
Subsequently, for each path in ${\cal R}_q^t$, the algorithm calculates the number of time slots occupied by all stay links, determining the path's feasibility based on whether the time spent on stay links exceeds the time required for all VNFs to process the service data.
Paths that do not satisfy this condition are omitted.
Lines 2-6 in Algorithm \ref{alg:pathSetGeneration} illustrate the aforementioned operations.

\begin{algorithm}[tbp]
	\begin{footnotesize}
		\DontPrintSemicolon
		\caption{Potential path set generation}
		\label{alg:pathSetGeneration}
		\SetKwInOut{Input}{Input}
		\SetKwInOut{Output}{Output}

		\Input{${\rm A}_q^t, {id}_s^q, {id}_e^q, T_{min}^q$}
		\Output{${\cal P}_q^t$}

		\BlankLine
		${\cal R}_q^t \Leftarrow$ Generate the initial path set using $k$-shortest path algorithm\;

		\If{${\cal R}_q^t \ne \emptyset$}{
			\For{each path $r_q^t$ in ${\cal R}_q^t$}{
				$T^{'} \Leftarrow$ Calculate the total number of time slots occupied by all stay links in $r_q^t$\;
				\If{$T^{'} \ge T_{min}^q$}{
					Add path $r_q^t$ into ${\cal P}_q^t$\;
				}
			}
		}
	\end{footnotesize}
\end{algorithm}

\begin{algorithm}[bp]
	\begin{footnotesize}
		\caption{VNF placement}
		\label{alg:vnfDeployment}
		\SetKwInOut{Input}{Input}
		\SetKwInOut{Output}{Output}
		\DontPrintSemicolon
		\Input{The TEG ${\cal G} = \left( {{\cal N},{\cal L}} \right)$, $p_q^t$, ${id}_s^q$, ${id}_e^q$}
		\Output{$vnfFlag$}

		\BlankLine
		Place the first and last VNF on node ${id}_s^q$ and ${id}_e^q$\;
		$vnfFlag = 0$\;
		Set the index of the next VNF to be placed as $i = 2$\;
		${\cal L}'_{s} \Leftarrow$ Find all stay links in path $p_q^t$\;

		\For{each stay link $l'_{s}$ in ${\cal L}'_{s}$}{
		\If{$i > (I_q-1)$}{
			Break\;
		}
		\While{$i \le (I_q-1)$}{
		\If{Node of $l'_{s}$ has enough computation resources for deploying VNF $i$ and processing service data}{
		Place VNF $i$ on this node\;
		$i = i+1$\;
		}
		\Else{
			Break\;
		}
		}
		}

		\If{$i > (I_q-1)$}{
			Update the computation resource status of nodes in $p_q^t$\;
			\Return $vnfFlag = 1$\;
		}
		\Else{
			\Return $vnfFlag = 0$\;
		}
	\end{footnotesize}
\end{algorithm}

In Algorithm \ref{alg:TEDG}, path searching and VNF placement start by considering the minimum slot requirements $T_{min}^q$ of each service, which is given by
\begin{equation}
	\label{eq:minComputationTimeslot}
	T_{min}^q = \left\lceil {\sum\limits_{i \in {{\cal I}_q}} {\frac{{ {\delta _q}\varepsilon}}{{{c_q}\tau }}} } \right\rceil,
\end{equation}
where $\lceil \cdot \rceil $ represents the ceiling function, which rounds up its argument to the nearest integer.
For a given service $q$ with destination node $m$ and $t=T_{min}^q$, the index of this destination node in the adjacency matrix is $m + NT_{min}^q$.
Next, Algorithm \ref{alg:pathSetGeneration} identifies all potential paths from the service source node to this index.
If there is no feasible path spanning $T_{min}^q$ time slots in the TEG, the index of the destination node in the adjacency matrix is further extended to the next time slot with the index of $m + N(T_{min}^q+1)$, until a feasible path is discovered.
If the index of the destination node reaches $m + N(T+1)$ without finding a feasible path, service $q$ is disregarded.

After obtaining a non-empty set of potential paths, VNF deployment is conducted, as shown in lines 19-24 of Algorithm \ref{alg:TEDG} and detailed in Algorithm \ref{alg:vnfDeployment}.
First, the stay links and their respective time slot lengths within the path are determined.
Then, VNFs are deployed in sequence based on the stay link order and resource status.
The VNF placement for a single service is deemed complete once all required VNFs are successfully deployed.
Consequently, the network status, including available communication and computation resource, is updated, and planning for the next service proceeds until all services are addressed.

\subsection{Algorithm Complexity}
\label{sec:algorithmCom}
TEDG algorithm involves adjacency matrix construction, potential path set generation, and VNF placement.
Specifically, the complexity of the adjacency matrix construction is determined by Algorithm \ref{alg:adjacentMatrixConst}, which is $\mathcal{O}(tN^2)$.
The complexity of the potential path set generation is dominated by $k$-shortest path searching in Algorithm \ref{alg:pathSetGeneration}, whose complexity is $\mathcal{O}(k(E+S\log S))$ \cite{berclaz2011multipleobject}, where $S =tN+N$ is the length of one dimension in the adjacency matrix, and $E=tN^2$ is the number of available links in the TEG with $t$ time slots.
Algorithm \ref{alg:vnfDeployment} is used to find the feasible path from a path set to place all the VNFs of the service, and the complexity is determined by the number of time slots occupied by stay links and the number of required VNFs.
In the worst case, there is only one communication link, and the other time slots are occupied by stay links.
At this time, the complexity of VNF placement on one path is $\mathcal{O}(I_q(t-1))$.
Since there are at most $k$ paths in ${\cal P}_q^t$, the complexity of Algorithm \ref{alg:vnfDeployment} is $\mathcal{O}(kI_q(t-1))$ in the worst case.
Therefore, the overall complexity of the proposed TEDG algorithm is $\mathcal{O}(Q\sum_{t=1}^{T} (tN^2+ktN^2+k(tN+N)\log (tN+N)+kI_q(t-1))) \cong \mathcal{O}(\frac{QTN^2(1+k)(1+T)}{2})$.

\begin{table}[tbp]
	\centering
	\caption{Simulation Parameters}
	\label{tab:simulationParas}
	\begin{tabular}{c|c}
		\toprule[1pt]
		\textbf{Parameter}                       & \textbf{Value}         \\ \midrule[1pt]
		The number of satellites                 & 12                     \\
		The number of ground stations            & 4                      \\
		The number of ground users               & 4                      \\
		The number of orbits                     & 4                      \\
		Orbits altitude                          & 700 $km$                 \\
		Orbits inclination                       & 45 deg                 \\
		Configuration period                     & 1 hour                 \\
		Time slot length                         & 100 s                  \\
		Service data size                        & $[100, 1000]$ Mbit     \\
		Service computation resource requirement & \{20,80\} unit/100Mbit \\
		VNF computation resource requirement     & \{20,30\} unit         \\
		The number of VNFs for each service      & 4                      \\
		Bandwidth                                & 20 MHz                 \\
		Carrier frequency                        & 20 GHz                 \\
		Thermal noise                            & 354.81 K               \\
		EIRP plus receiver antenna gain          & 3.74 W                 \\
		\bottomrule[1pt]
	\end{tabular}
\end{table}

\section{Performance Evaluation}
\label{sec:performanceEvaluations}
In this section, extensive experiments are conducted to evaluate the performance of the proposed algorithms.
The Walker constellation is used to construct an LEO satellite network in AGI Systems Tool Kit (STK) \cite{agi2022systemstool}, based on which the parameters of the dynamic SDSTN topology are generated.
All algorithms are executed on a simulation platform with an AMD Ryzen 5 5600X processor and 16GB of memory, and GUROBI is used to solve the master problem and subproblem in the BDBC algorithm.

\begin{figure}[tbp]
	\centering
	\begin{minipage}[t]{0.8\linewidth}
		\centering
		\includegraphics[width=\textwidth]{./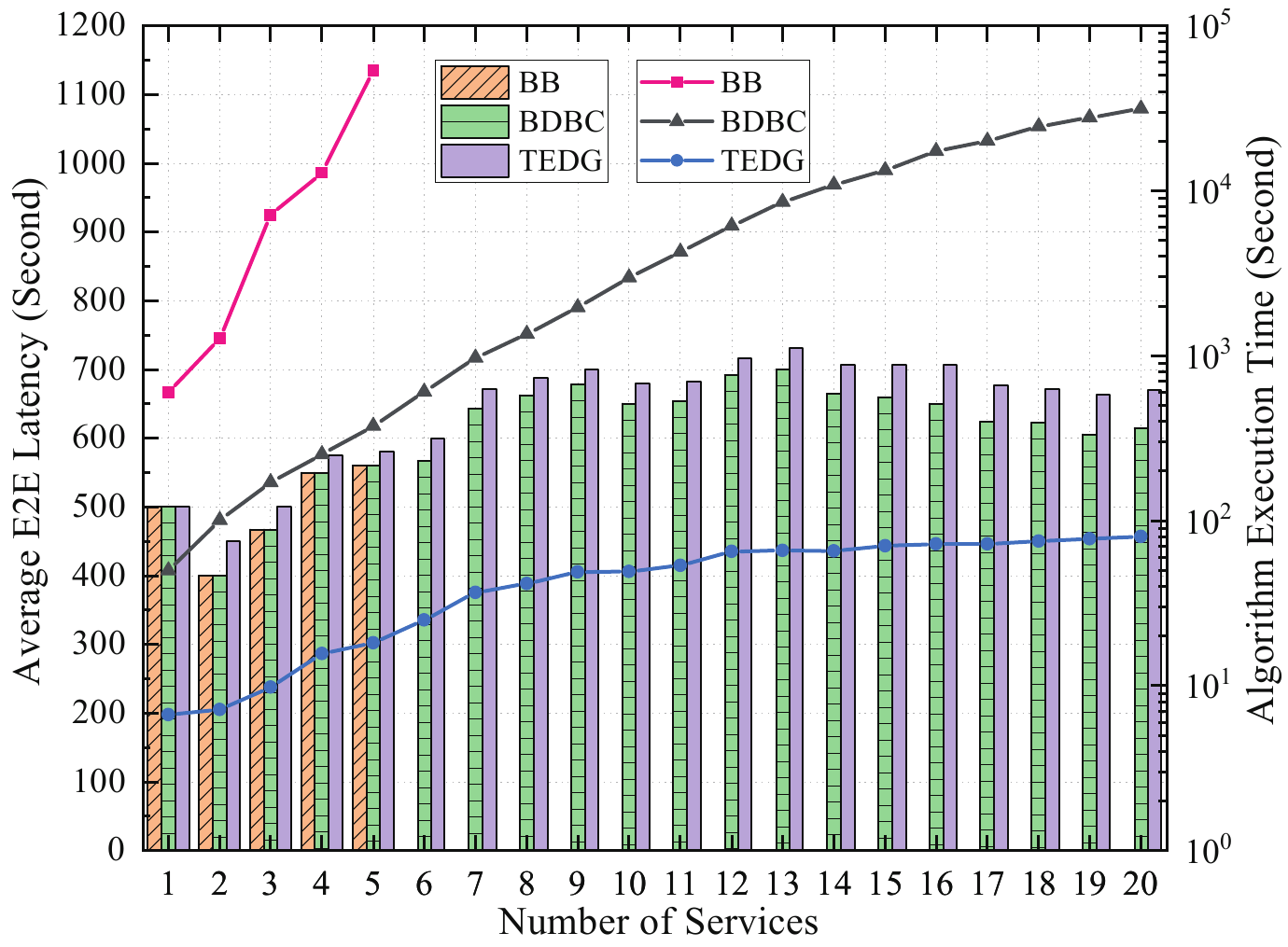}
		\vspace{-0.25in}
		\caption{Performance of different algorithms compared with the optimal solution. The bars indicate average service latency and the lines indicate algorithm execution time.}
		\label{fig:eva_algsvstaskNum}
	\end{minipage}%
	\vspace{0.15in}
	\hfill%
	\begin{minipage}[t]{0.8\linewidth}
		\centering
		\includegraphics[width=\textwidth]{./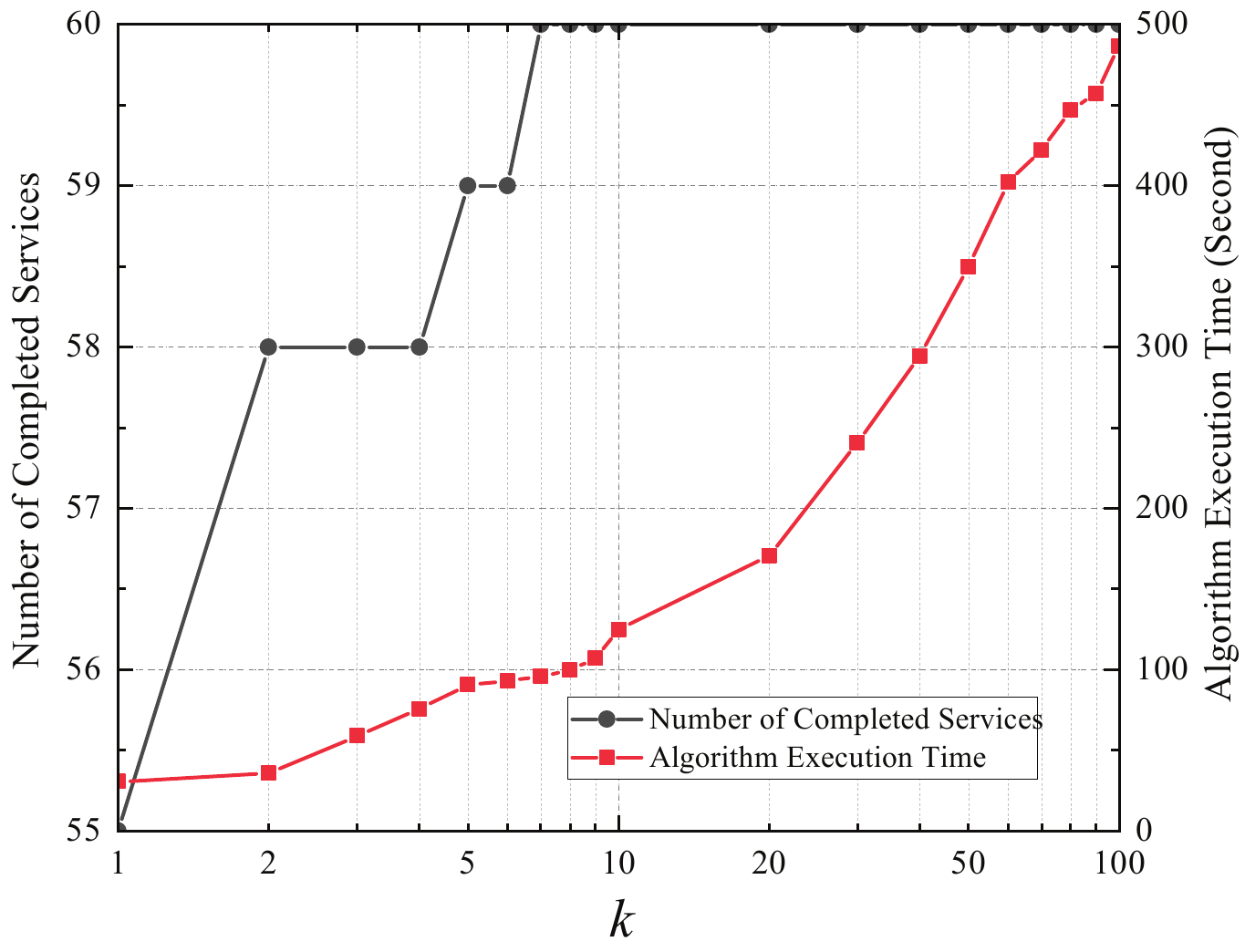}
		\vspace{-0.25in}
		\caption{The number of completed services and algorithm execution time versus the number of shortest paths $k$ in Algorithm \ref{alg:pathSetGeneration}.}
		\label{fig:eva_kvscompletedService}
	\end{minipage}
\end{figure}

\begin{figure*}[tbp]
	\centering
	\subfigcapskip=-5pt
	\subfigure[]{
		\label{fig:eva_ccPerformance20q}
		\includegraphics[width=0.43\linewidth]{./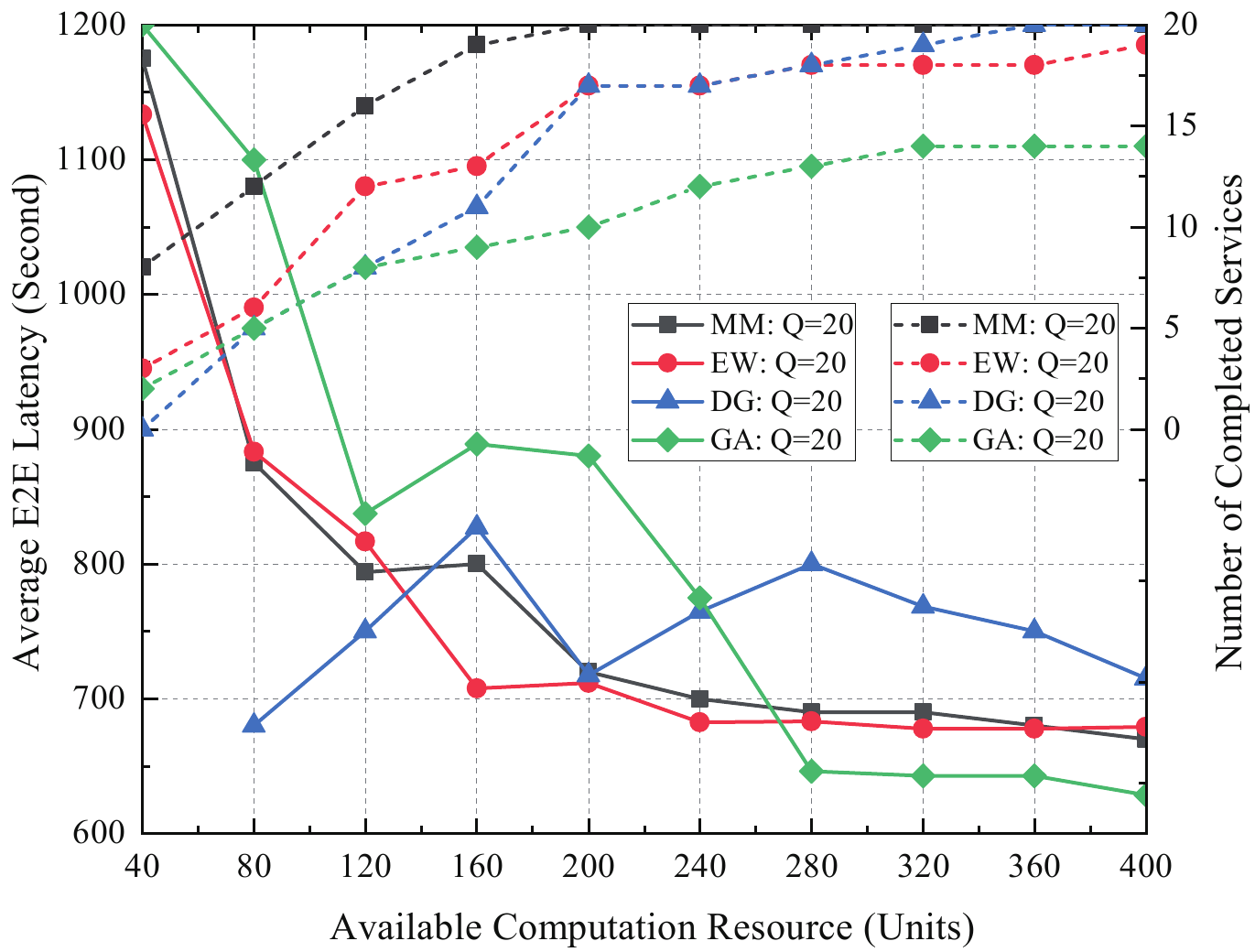}}
	\subfigure[]{
		\label{fig:eva_ccPerformance40q}
		\includegraphics[width=0.43\linewidth]{./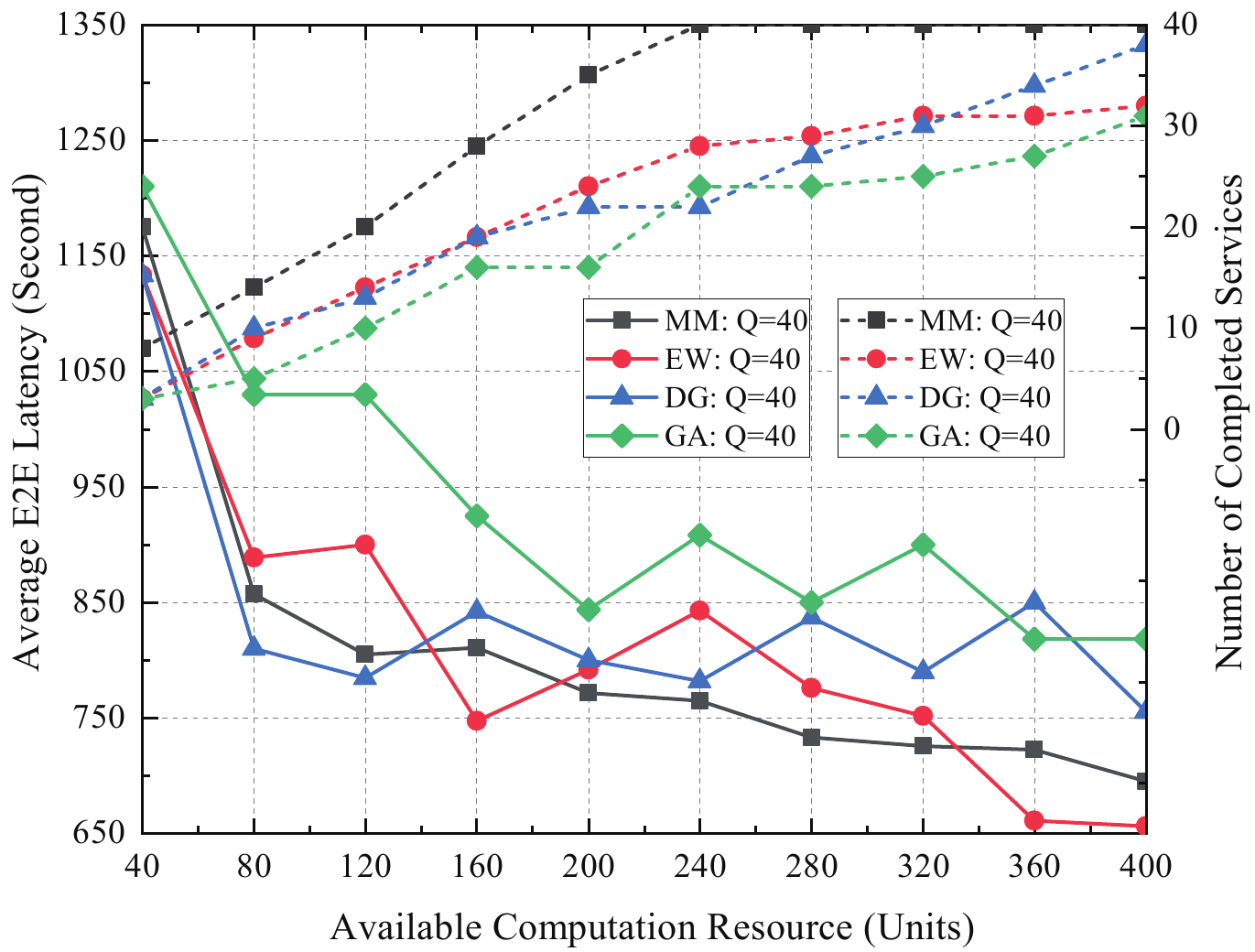}}

	\subfigure[]{
		\label{fig:eva_ccPerformance60q}
		\includegraphics[width=0.43\linewidth]{./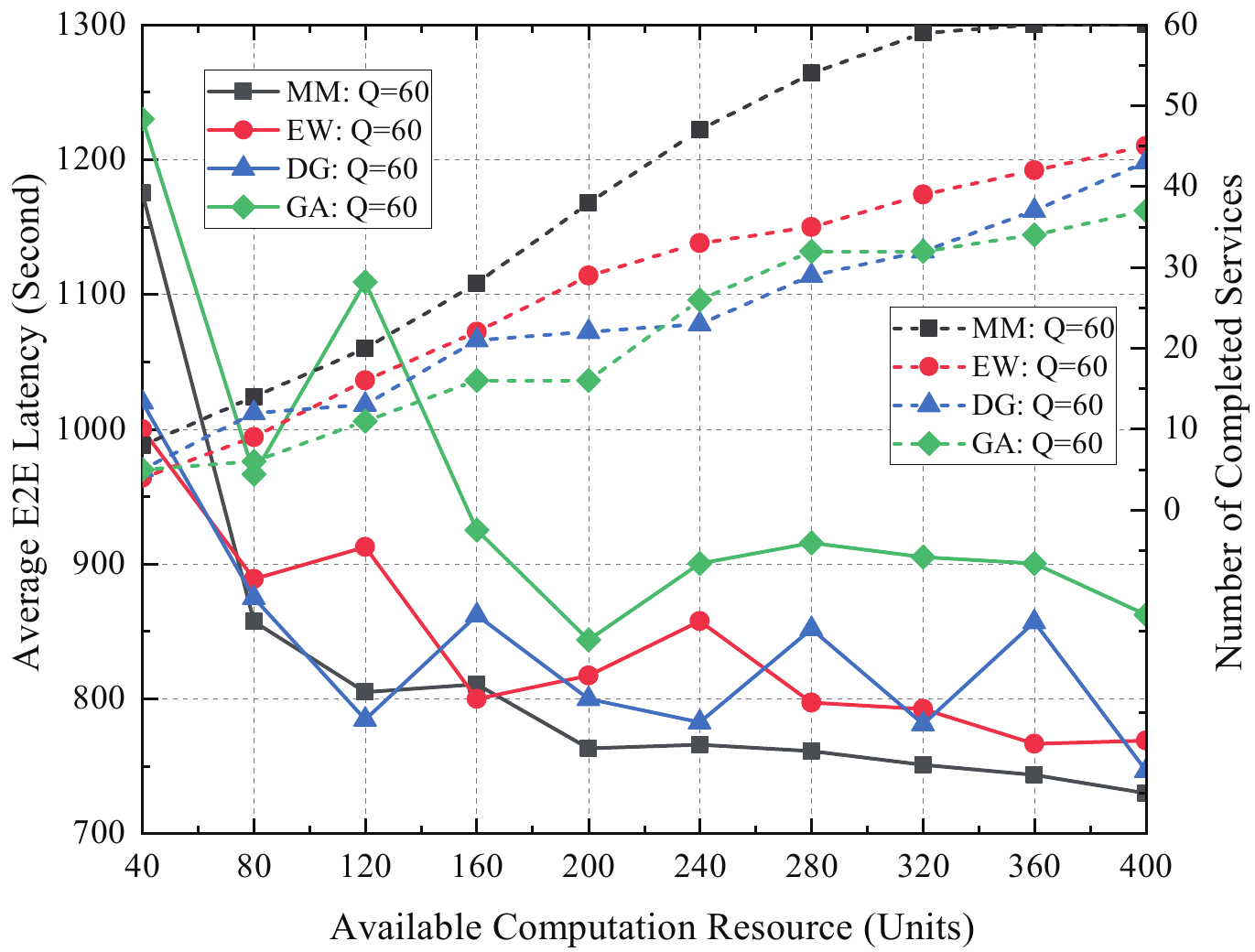}}
	\subfigure[]{
		\label{fig:eva_ccPerformanceMM}
		\includegraphics[width=0.43\linewidth]{./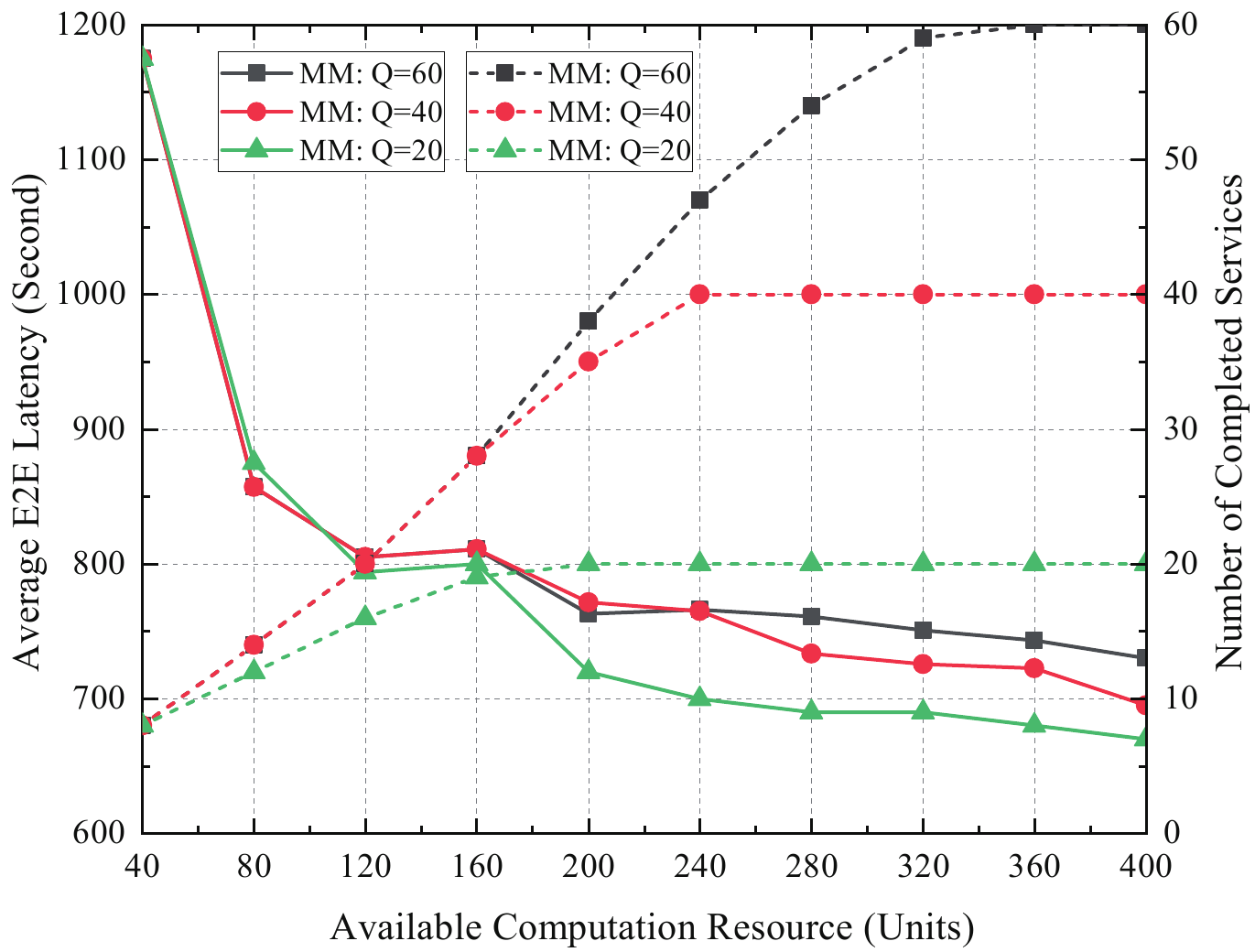}}
	\caption{Impact of available computation resource per node on the network performance (12 software-defined satellites, 4 ground stations, and 4 ground users). The solid lines correspond to the average service latency, while the dashed lines correspond to the number of completed services.}
	\label{fig:eva_ccPerformance}
\end{figure*}

\subsection{Simulation Setup}
The common simulation parameters for all experiments are presented in Table \ref{tab:simulationParas}.
The SDSTN scenario considered in our experiments includes 12 satellites which are distributed among 4 orbits with ${45^ \circ }$ inclination and altitude of 700 $km$, and the orbit period is 5927 seconds.
In addition, there are 4 ground stations located at Jiuquan (39.76$^\circ$N, 98.56$^\circ$E), Taiyuan (37.87$^\circ$N, 112.56$^\circ$E), Wenchang (19.62$^\circ$N, 110.75$^\circ$E), and Xichang (27.89$^\circ$N, 102.27$^\circ$E), respectively, and 4 ground users located at (31.49$^\circ$N, 110.13$^\circ$E), (34.45$^\circ$N, 84.98$^\circ$E), (52.26$^\circ$N, 124.35$^\circ$E), and (21.98$^\circ$N, 100.94$^\circ$E), respectively.
The configuration period is set to 3600 seconds, consequently, the network topology is generated by STK over a time horizon of 3600 seconds, spanning from 10 April 2022 06:00:00.000 UTCG to 10 April 2022 07:00:00.000 UTCG.
Moreover, the number of time slots is set as $T=36$, and thus each time slot has a length of $\tau=100$ seconds.
Communication links operate in Ka-band with a bandwidth of 20 MHz.
The transmission capacity between two ground stations is set as 1 Gbps and all ground stations are able to place VNFs.
We assume that each service requires two VNFs in addition to the first and last dummy VNF, and hence we have $I_q = 4$.
The computation resource required for hosting each VNF is either 20 units or 30 units.
The source node and destination node of each service are randomly selected according to the service type.

\subsection{Trade-off Between Complexity and Performance}
To verify the optimality of the proposed BDBC algorithm, we compare it with the solution of the branch-and-bound (BB) algorithm.
However, due to the high computational complexity of both BB and BDBC, the corresponding comparison is limited to a maximum of 20 services.
In addition, the performance of the proposed TEDG algorithm is also compared with these optimal algorithms to demonstrate the balance achieved by the TEDG algorithm between complexity and performance.
Furthermore, we conduct an evaluation to investigate the appropriate value for $k$, the number of shortest paths generated by Algorithm \ref{alg:pathSetGeneration}, to achieve a balance between the performance and execution time of the TEDG algorithm.

\begin{figure*}[tbp]
	\centering
	\subfigure[]{
		\label{fig:eva_vSatPerformance20q}
		\includegraphics[width=0.43\linewidth]{./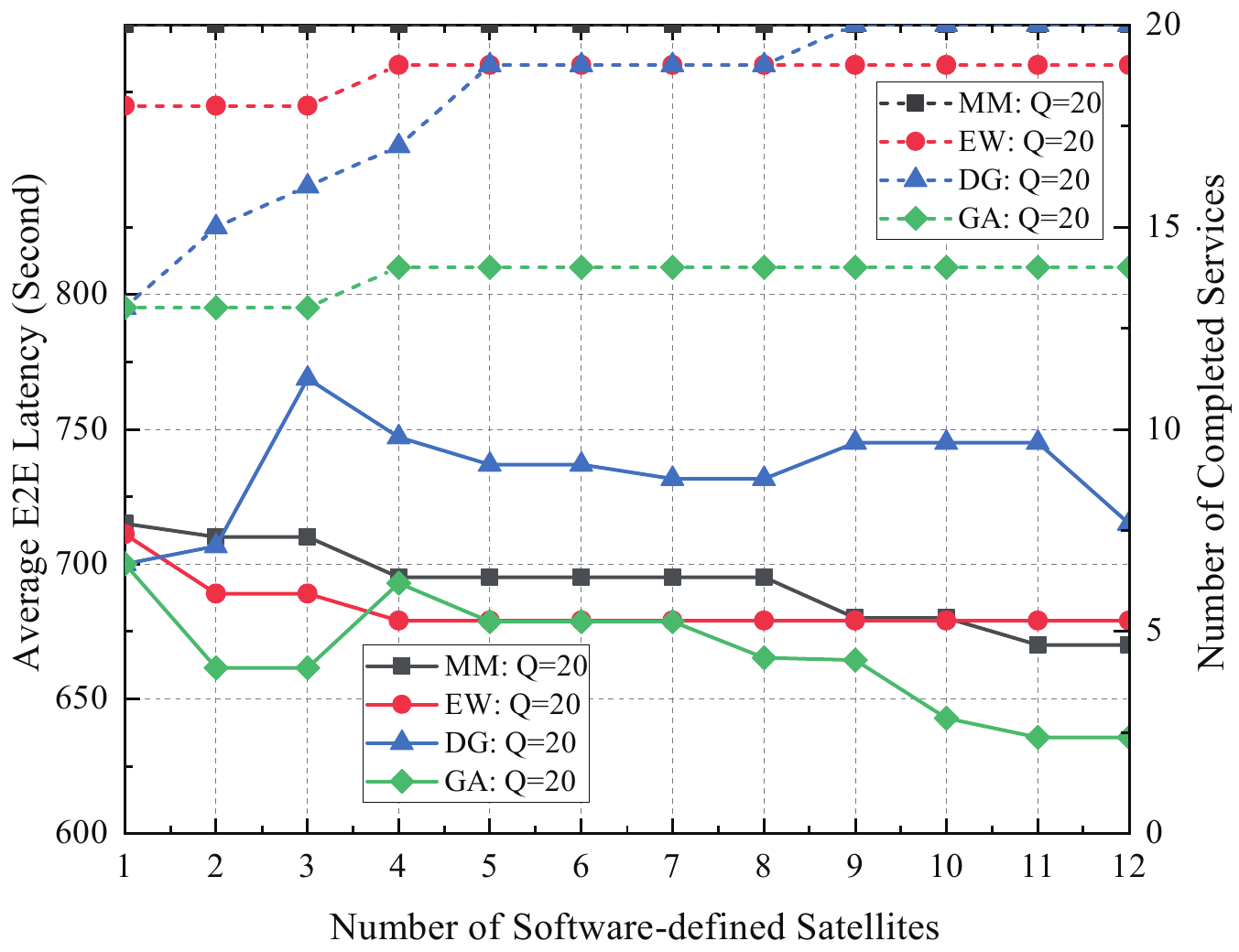}}
	\subfigure[]{
		\label{fig:eva_vSatPerformance40q}
		\includegraphics[width=0.43\linewidth]{./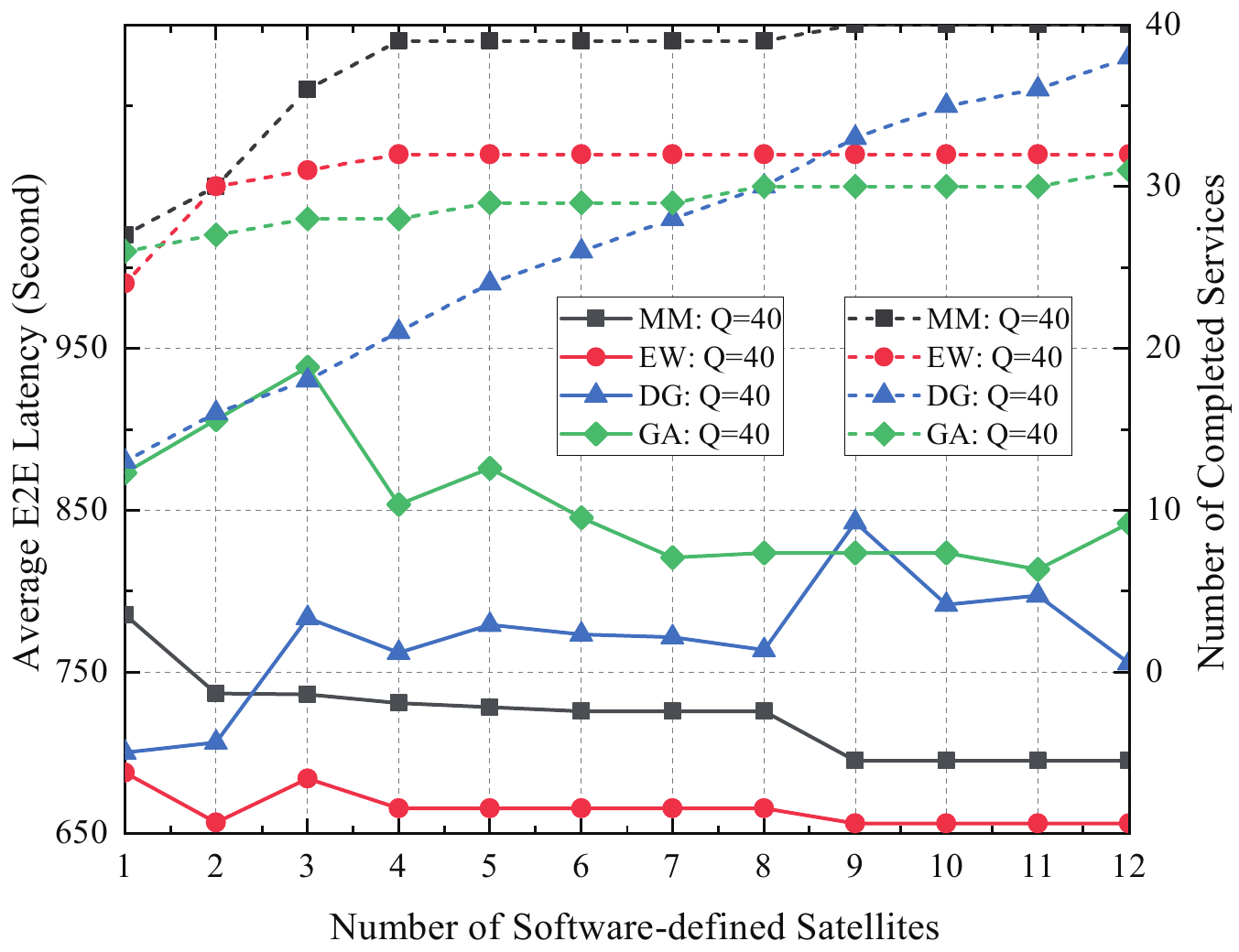}}

	\subfigure[]{
		\label{fig:eva_vSatPerformance60q}
		\includegraphics[width=0.43\linewidth]{./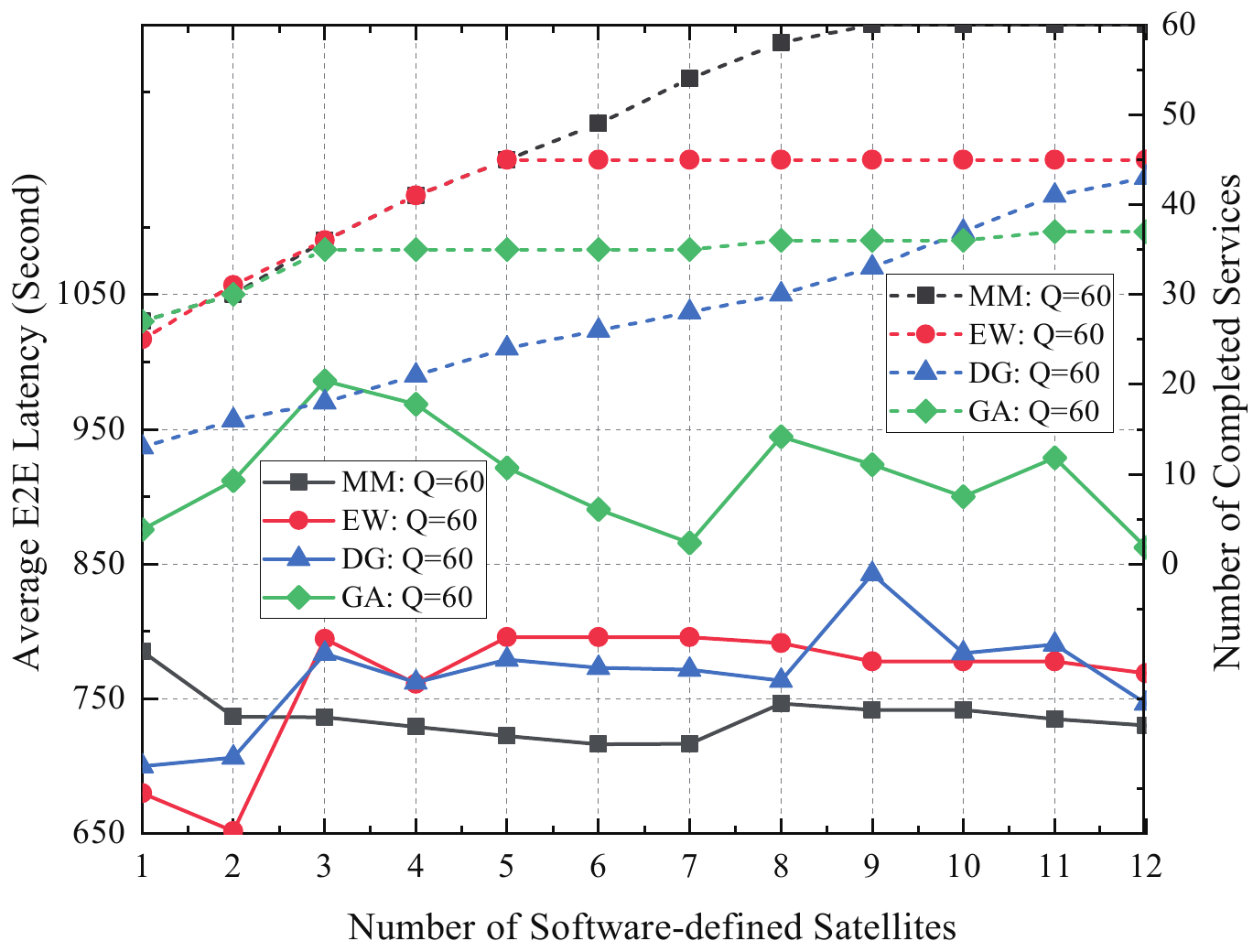}}
	\subfigure[]{
		\label{fig:eva_vSatPerformanceMM}
		\includegraphics[width=0.43\linewidth]{./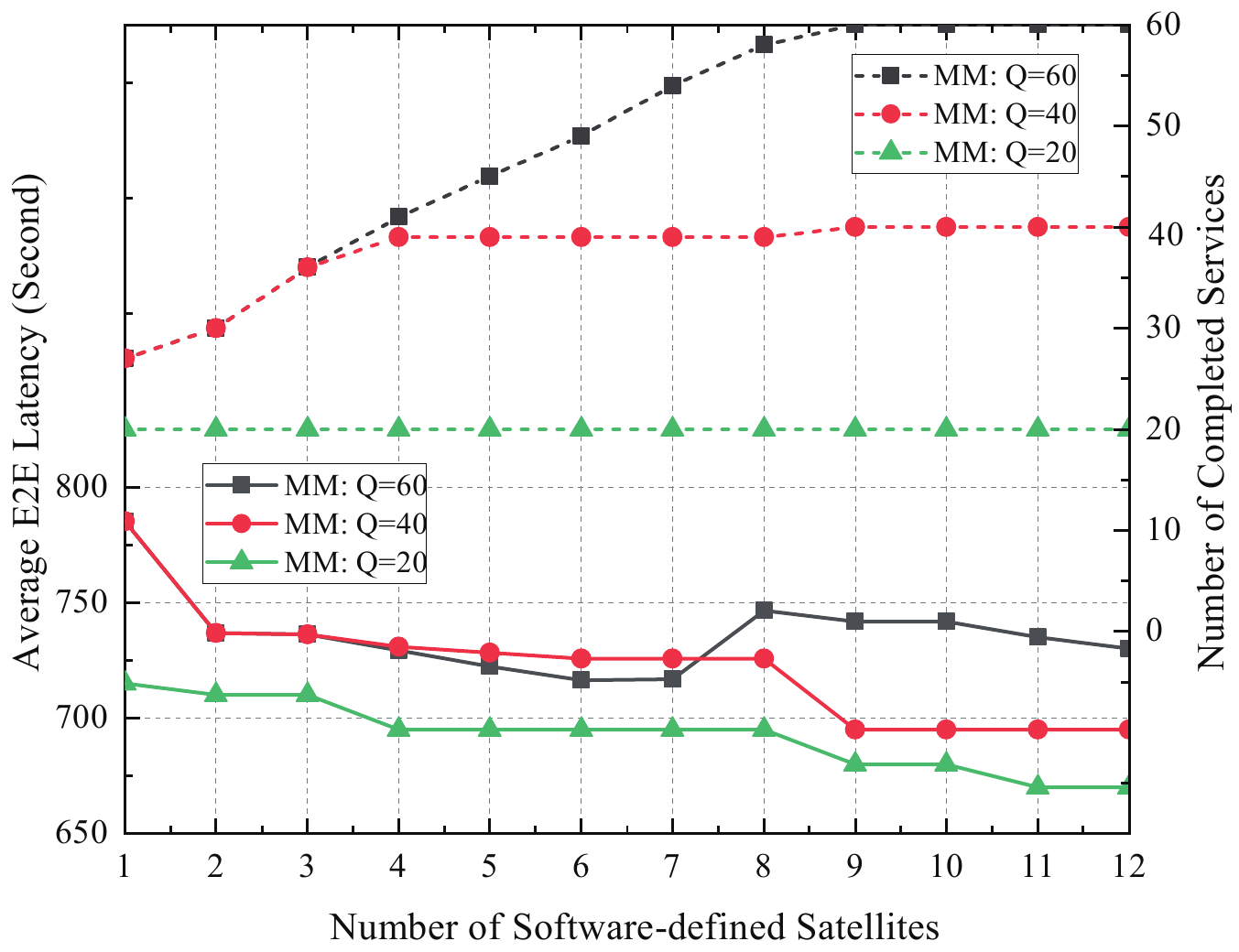}}
	\caption{Impact of the number of software-defined satellites on the network performance (12 satellites, 4 ground stations, 4 ground users, and 400 units computation resource per VNF-enabled node). The solid lines correspond to the average service latency, while the dashed lines correspond to the number of completed services.}
	\label{fig:eva_vSatPerformance}
\end{figure*}

In Fig. \ref{fig:eva_algsvstaskNum}, the optimization objective of our formulated problem, ``Average E2E Latency", is represented by bars and serves as a key metric to evaluate the effectiveness of the proposed algorithms.
Additionally, the metric ``Algorithm Execution Time" is represented by solid lines in Fig. \ref{fig:eva_algsvstaskNum} and is used to assess the time complexity of algorithm execution.
In the experiment, the number of software-defined satellites is set to 12, and each network node has 400 units of computation capacity.
Meanwhile, the number of shortest paths $k$ in Algorithm \ref{alg:pathSetGeneration} of TEDG is set to 100.
Moreover, since the BB algorithm is very time-consuming, we only evaluate its performance with a few services.
Based on the results, the optimality of our proposed BDBC algorithm is verified by comparing it with the BB algorithm.
However, the computational complexity of the BDBC algorithm is relatively lower compared to the BB algorithm, owing to the search space being tightened by Benders cuts iteratively.
For the TEDG algorithm, even though it does not provide an optimal solution, it approaches optimal performance while substantially reducing computational complexity compared to both BB and BDBC algorithms.
Specifically, when there are 20 services, the execution time of TEDG algorithm is significantly reduced by nearly three orders of magnitude compared to BDBC algorithm, even though the average service latency performance of the TEDG algorithm is reduced by 7\%.

Next, we evaluate the impact of the number of shortest paths $k$ in Algorithm \ref{alg:pathSetGeneration} on the performance of the TEDG algorithm.
In the experiment, we consider a scenario with 12 software-defined satellites, each equipped with 400 units of computation resource.
Additionally, a total of 60 services require planning.
In Fig. \ref{fig:eva_kvscompletedService}, the metric ``Number of Completed Services" represents the number of services successfully delivered from the source node to the destination node within the configuration period and is used to evaluate the performance of the TEDG algorithm, while the ``Algorithm Execution Time" metric is used to estimate the time cost of TEDG.
The figure shows that as $k$ increases, the number of completed services gradually increases until all services are completed.
This is reasonable, a larger value of $k$ provides more path choices for subsequent VNF deployment.
However, excessive path searching may not contribute to performance improvement when $k$ is sufficiently large.
Moreover, an increase in $k$ results in a longer algorithm execution time, which is basically consistent with the computational complexity analysis presented in Section \ref{sec:algorithmCom}.

\subsection{Performance Evaluation of TEDG Algorithm}

To demonstrate the efficiency of our proposed TEDG algorithm with the modified max-min (MM) normalization method as outlined in Section \ref{sec:greedySolution}, we consider the following three schemes as baselines:
\begin{enumerate}
	\item \emph{TEDG algorithm with equal weight (EW) scheme:}
	      We employ a variation of the TEDG algorithm where the cost of all available links, including communication and stay links, is assigned an equal weight of 1.
	\item \emph{Decoupled greedy algorithm (DG) scheme:}
	      The decoupled greedy algorithm proposed in \cite{wang2020sfcbasedservice} is modified to fit our scenario, which allocates resource to services on a configuration period basis instead of a time slot basis.
	\item \emph{Genetic algorithm (GA) scheme:}
	      GA is applied to solve the joint VNF placement and routing planning problem \eqref{eq:primalProblem}, and it has also been adopted in \cite{rankothge2017optimizingresource}.
\end{enumerate}

We investigate the effectiveness of TEDG in joint VNF placement and routing planning, examining the impact of the number of services, the onboard computation capacity of each software-defined satellite, and the number of software-defined satellites.
In this simulation part, we employ two key metrics for evaluation: ``Average E2E Latency", which is the optimization objective of our formulated problem, and ``Number of Completed Services", denoting the number of services successfully transmitted from the source node to the destination node within the configuration period.

The performance of different algorithms under different available computation resource is shown in Fig. \ref{fig:eva_ccPerformance20q}, Fig. \ref{fig:eva_ccPerformance40q}, and Fig. \ref{fig:eva_ccPerformance60q} for $Q=20$, $Q=40$, and $Q=60$, respectively.
In addition, the performance of TEDG algorithm with MM method is illustrated in Fig. \ref{fig:eva_ccPerformanceMM}.
It is evident that the number of completed services gradually increases with the increase of the computation capacity per node, as less stringent resource constraints result in a more extensive feasible solution space.
Moreover, the TEDG algorithm, whether employing the MM or EW method, outperforms both the DG and GA schemes in terms of completed services within the configuration period.
For instance, when $Q=60$ and the available computation resource at each network node is 400 units, TEDG algorithm with MM method is able to complete all 60 services and achieves a nearly 58\% improvement in terms of the number of completed services compared to the GA scheme, as well as a 43\% improvement compared to the DG scheme.
However, for the other three baselines, some services cannot be completed under the same computation resource configuration due to their lower resource utilization.
On the other hand, the average service latency tends to increase as the computation resource at each node decreases, as service flows must traverse more links to reach nodes with available computation resource.
In comparison to the three baselines, when $Q=60$, the TEDG algorithm with MM method achieves better latency performance while still completing a higher number of services.
Specifically, with 360 units of computation resource per network node, TEDG algorithm with MM method significantly reduces average service latency by roughly 17\% compared to GA scheme and 12\% compared to DG scheme.

Considering that not all satellites support VNF deployment in practice, we further investigate the impact of the number of software-defined satellites on both the number of completed services and the average service latency, as depicted in Fig. \ref{fig:eva_vSatPerformance}.
It can be seen that with the increase in the number of software-defined satellites, the number of completed services within the configuration period also increases.
Specifically, for TEDG algorithm with MM method, 9 software-defined satellites are required to complete all 60 services, and only 1 software-defined satellite is required to complete 20 services.
However, for the three baselines, namely TEDG algorithm with EW method, DG scheme, and GA scheme, up to 45, 43, and 37 services are completed, respectively, when 12 software-defined satellites are configured.
In fact, increasing the number of software-defined satellites implies an increase in the total computation resource of the network, subsequently reducing the communication hops between network nodes for services to reach the available computation resource.
In addition, Fig. \ref{fig:eva_vSatPerformance} clearly shows that the average service latency tends to decrease with the increase in the number of software-defined satellites.
Specifically, for TEDG algorithm with MM method and $Q=20$, since all services can be successfully planned when only one satellite supports VNF, the average service latency gradually decreases as the number of software-defined satellites increases.

\section{Conclusion}
\label{sec:conclusion}
In this paper, the joint virtual network function placement and routing planning in software-defined satellite-terrestrial networks has been investigated, aiming at minimizing the average end-to-end service latency, and time-evolving graph has been involved to characterize the dynamics of network topology and resource availability.
To deal with the formulated NP-hard problem, objective and constraints have been linearized and a Benders decomposition based branch-and-cut (BDBC) algorithm has been proposed.
Towards practical use, a time expansion-based decoupled greedy (TEDG) algorithm has been designed with rigorous complexity analysis.
Simulation results have revealed the optimality of BDBC algorithm and the low complexity of TEDG algorithm.

\ifCLASSOPTIONcaptionsoff
	\newpage
\fi

\bibliographystyle{IEEEtran}
\bibliography{bibUsedinPaper,bstControlForIEEEtran}

\end{document}